\newtheorem{theorem}{Theorem}
\newtheorem{problem}{Problem}
\newtheorem{assumption}{Assumption}
\title{\huge Data-Driven Multi-Objective Controller Optimization for a Magnetically-Levitated Nanopositioning System}
\author{
 Xiaocong~Li,~\IEEEmembership{Member,~IEEE},
 Haiyue~Zhu,~\IEEEmembership{Member,~IEEE},
 Jun~Ma,~\IEEEmembership{Member,~IEEE},
 Tat~Joo~Teo,~\IEEEmembership{Member,~IEEE},
 Chek~Sing~Teo,~\IEEEmembership{Member,~IEEE},
	Masayoshi Tomizuka, \IEEEmembership{Life Fellow,~IEEE,}
 and Tong~Heng~Lee	
\thanks{This work is supported by Collaborative Research Project U18-R-030SU under SIMTech-NUS Joint Lab on Precision Motion Systems (U12-R-024JL).\textit{(Corresponding author: Haiyue Zhu.)}}
\thanks{X.~Li, H.~Zhu and C.~S. Teo are with the Mechatronics Group, Singapore Institute of Manufacturing Technology, Singapore 138634. (e-mail: li\_xiaocong@simtech.a-star.edu.sg, zhu\_haiyue@simtech.a-star.edu.sg, csteo@simtech.a-star.edu.sg).}
\thanks{J.~Ma and M.~Tomizuka
are with the Department of Mechanical Engineering, University of California, Berkeley, CA 94720, USA (e-mail: jun.ma@berkeley.edu, tomizuka@berkeley.edu).}
\thanks{T.~J. Teo and T.~H. Lee are with the Department of Electrical and Computer Engineering, National University of Singapore, Singapore 117583 (e-mail: eledttj@nus.edu.sg, eleleeth@nus.edu.sg).}
}
\newcommand\copyrighttext{%
	\scriptsize  \textbf{Accepted final version.} To appear in \emph{IEEE/ASME Transactions on Mechatronics}, DOI: 10.1109/TMECH.2020.2999401 
	\textcopyright 2020 IEEE.  Personal use of this material is permitted.  Permission from IEEE must be obtained for all other uses, in any current or future media, including reprinting/republishing this material for advertising or promotional purposes, creating new collective works, for resale or redistribution to servers or lists, or reuse of any copyrighted component of this work in other works.}
\newcommand\copyrightnotice{%
	\begin{tikzpicture}[remember picture,overlay]
	\node[anchor=south,yshift=10pt] at (current page.south) {\fbox{\parbox{\dimexpr\textwidth-\fboxsep-\fboxrule\relax}{\copyrighttext}}};
	\end{tikzpicture}%
}
\begin{document}	
\maketitle
\pagestyle{empty}
\thispagestyle{empty}
\copyrightnotice

\begin{abstract}
The performance achieved with traditional model-based control system design approaches
typically relies heavily upon accurate modeling of the motion dynamics.
However, modeling the true dynamics of present-day increasingly complex systems
can be an extremely challenging task;
and the usually necessary practical approximations
often renders the automation system to operate in a non-optimal condition.
This problem can be greatly aggravated in the case of
a multi-axis magnetically-levitated (maglev) nanopositioning system
where the fully floating behavior and multi-axis coupling
make extremely accurate identification
of the motion dynamics largely impossible.
On the other hand, in many related industrial automation applications, e.g.,
the scanning process with the maglev system,
repetitive motions are involved which could generate
a large amount of motion data under non-optimal conditions.
These motion data essentially contain rich information;
therefore, the possibility exists to develop an intelligent automation system
to learn from these motion data, and
to drive the system to operate towards optimality in a data-driven manner.
Along this line then, this paper proposes a data-driven model-free controller optimization approach
that learns from the past non-optimal motion data to iteratively improve the motion control performance.
Specifically, a novel data-driven multi-objective optimization approach
is proposed that is able to automatically estimate
the gradient and Hessian purely based on the measured motion data; the multi-objective cost function is suitably designed to take into account both smooth and accurate trajectory tracking.
In the work here, experiments are then conducted on the maglev nanopositioning system
to demonstrate the effectiveness of the proposed method,
and the results
show rather clearly
the practical appeal of our methodology for related complex robotic systems with no accurate model available.

\end{abstract}
\begin{IEEEkeywords}
	Learning-based control, robot learning, data-driven optimization, iterative feedback tuning, magnetic levitation, robot control, precision motion control, nanopositioning.
\end{IEEEkeywords}
\section{Introduction}

\IEEEPARstart{T}{he} magnetically-levitated nanopositioning technique~\cite{Zhu2019Tmech,Nguyen2017} is a promising solution for ultra-clean or vacuum precision motion applications
due to its excellent characteristics such as multi-axis mobility, ultra-precision,
large motion stroke, contact- and dust-free usage, etc.
However, due to its fully floating feature,
the maglev nanopositioning system requires sophisticated motion control
in all its six Degree-of-Freedom (DOF) to even simply stabilize at a constant position.
The advanced multi-axis positioning and trajectory tracking
further require high-performance precision motion control techniques \cite{kang2020six,lu2020adaptive,Chen2019precision,fan2019design,mishra2018precision}
to reject the internal/external disturbances and eliminate the coupling effects between axes.
Traditionally, such precision motion control systems
are designed and optimized based on the model (when available) obtained
from the first principle or system identification,
i.e., model-based approach \cite{wen2019an,xu2017continuous,lee2019harmonic,wang2019dynamical}.
However, obtaining an accurate model for the multi-axis maglev nanopositioning system
is challenging and time-consuming;
and the model obtained is typically often not adequately representative of the true dynamics,
e.g., the coupling between axes is often not taken into account.
To address this often-occurring general issue, it is a notable trend where
learning-based methods are increasingly being explored in literature, wherein model parameters are not precisely known,
and yet the
appropriately optimal control performance can be obtained \cite{hou2017data,yin2015data,ma2020advanced}.
This data-driven methodology enables learning from available signals in the past, and also prevailing,
non-optimal control settings
to achieve a significant performance improvement
for various cases of
real-world mechatronic systems \cite{Chen2020iterative,de2019finite,xie2019iterative,meng2017robust}.

Data-driven controls are essentially developed based on the concept
that machines can improve their performance by learning
from previous executions of the same or similar tasks,
in a way that closely resembles how humans learn.
A promising trend in data-driven controls is deep reinforcement learning \cite{hwangbo2019learning},
wherein a neural network policy is trained based on real-world motion data as well as appropriate simulations.
In addition to this end-to-end approach, deep neural networks
can also be used for trajectory tracking in many robotic applications \cite{li2017deep}.
However, the limitation of these neural network based approaches is the requirement for a massive amount of training data.
Also, the uncertainty in the important system stability issue due to the black-box nature of neural networks
often becomes a concern, especially for safety-critical applications.
Apart from the neural network based approaches,
\cite{yip2014model} proposed a novel model-less feedback control design for soft robotics,
and it was further extended to the hybrid position/force control problem in \cite{yip2016model}.
The proposed approaches in these works
allow the manipulators to interact with several constrained environments safely and stably,
and then generate a model-less feedback control policy from these interactions.
It is worthwhile to note that these works are mainly focused
on kinematic-model-free control
instead of dynamic-model-free control, i.e., the Jacobian is unknown and empirically estimated; therefore, challenges in dynamic control remain.

It is worthwhile to note that many industrial processes
such as scanning, pick-and-place, welding, and assembly,
involve repetitive motions; therefore, less computationally expensive learning approaches can be pursued.
For instance, the Iterative Learning Control (ILC) is a data-driven method
that is used widely in precision machines \cite{blanken2016batch, zhu2019internal}
and robotics \cite{kocc2019optimizing,angelini2018decentralized,tang2019model,wang2016robust}.
It makes use of the repetitive tracking error data gathered in previous cycles
to improve the performance of the system in subsequent cycles in a feedforward manner.
Thus, it is essentially a feedforward learning approach rather than feedback learning; nevertheless it can serve as a very useful complement to an existing feedback controller.
In \cite{berkenkamp2016safe},
the authors proposed a novel Gaussian process based feedback controller optimization algorithm
with applications to quadrotors.
This approach models the cost function as a Gaussian process
and explores the new controller parameters with a safe performance guarantee.
This enables automatic and safe optimization in repetitive robotic tasks without human intervention.
However, while greatly effective especially in guaranteeing safety, the convergence is relatively slow as it takes about 30 iterations to converge.

The Iterative Feedback Tuning (IFT) methodology is
one of the approaches in the class of
fast-converging data-driven controller optimization algorithms \cite{hjalmarsson2002iterative}.
Conceptually similar to the other approaches,
it makes use of the actual motion data to estimate the cost function gradient without relying on the system model.
In addition, the Hessian of the cost function can be estimated
to speed up the convergence.
The estimated gradient and Hessian are subsequently used in
the Gauss-Newton optimization procedure to iteratively obtain the optimal controller parameters.
This IFT approach has been widely used in many applications
such as path-tracking control of industrial robots \cite{xie2019iterative, xie2019robust},
ultra-precision wafer stage \cite{li2019convergence,heertjes2016constrained},
flow control over a circular cylinder \cite{son2018iterative}
and compliant rehabilitation robots \cite{meng2017robust}, etc.
Extensions of the IFT idea to other types of controller
includes iterative dynamic decoupling control \cite{li2019data}, disturbance observer sensitivity shaping \cite{li2019enhanced}, iterative feedforward tuning \cite{stearns2008iterative,van2008fixed}, and 3-DOF controller tuning \cite{li2017dataTIE,li2018data} etc. However, most of the existing work focused mainly on accurate tracking and did not take smooth tracking into account. In fact, in semiconductor manufacturing and many other robotic applications,
both accurate and smooth trajectory tracking are required \cite{ma2019parameter,ma2019parameterTIE}, and this challenge remains unsolved. Hence, the contribution of this paper is to propose a learning-based controller optimization algorithm to enable smooth and accurate tracking in repetitive tasks as illustrated schematically and conceptually in Fig.~\ref{fig:Overview}. To the best of our knowledge, this work is the first feedback controller optimization method to take into account both accurate and smooth tracking in a data-driven manner. Furthermore, it is worthwhile to note that the optimization process is both data-efficient and fast-converging.

\begin{figure}
	\centering
	\includegraphics[width=0.45\textwidth]{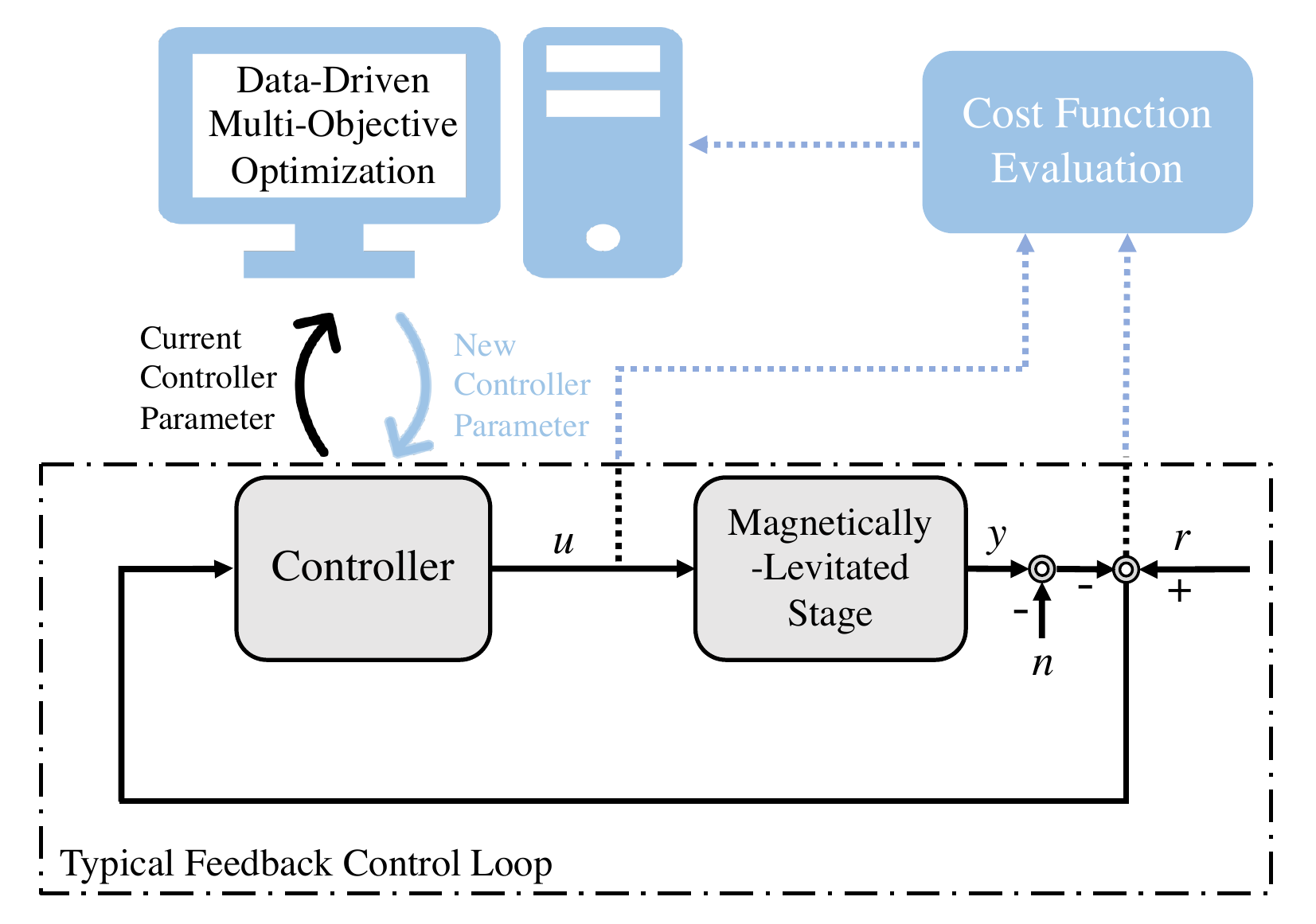}\\
	\caption{Overview of the data-driven multi-objective controller optimization algorithm. The algorithm iteratively updates the controller parameters based on the actual motion data from the previous iteration to minimize the cost function value in an iterative and model-free manner.}\label{fig:Overview}
\end{figure}

This paper is organized as follows. In Section II, a brief description of the magnetically-levitated nanopositioning system is provided. Then, in Section III, the proposed multi-objective controller optimization algorithm is described and analyzed in detail. In Section IV, experimental work is conducted based on the magnetically-levitated nanopositioning system to show the effectiveness of the proposed algorithm. Finally, conclusions are drawn in Section V.

\section{Magnetically-Levitated Nanopositioning System}
\begin{figure}
	\centering
	\includegraphics[width=0.45\textwidth]{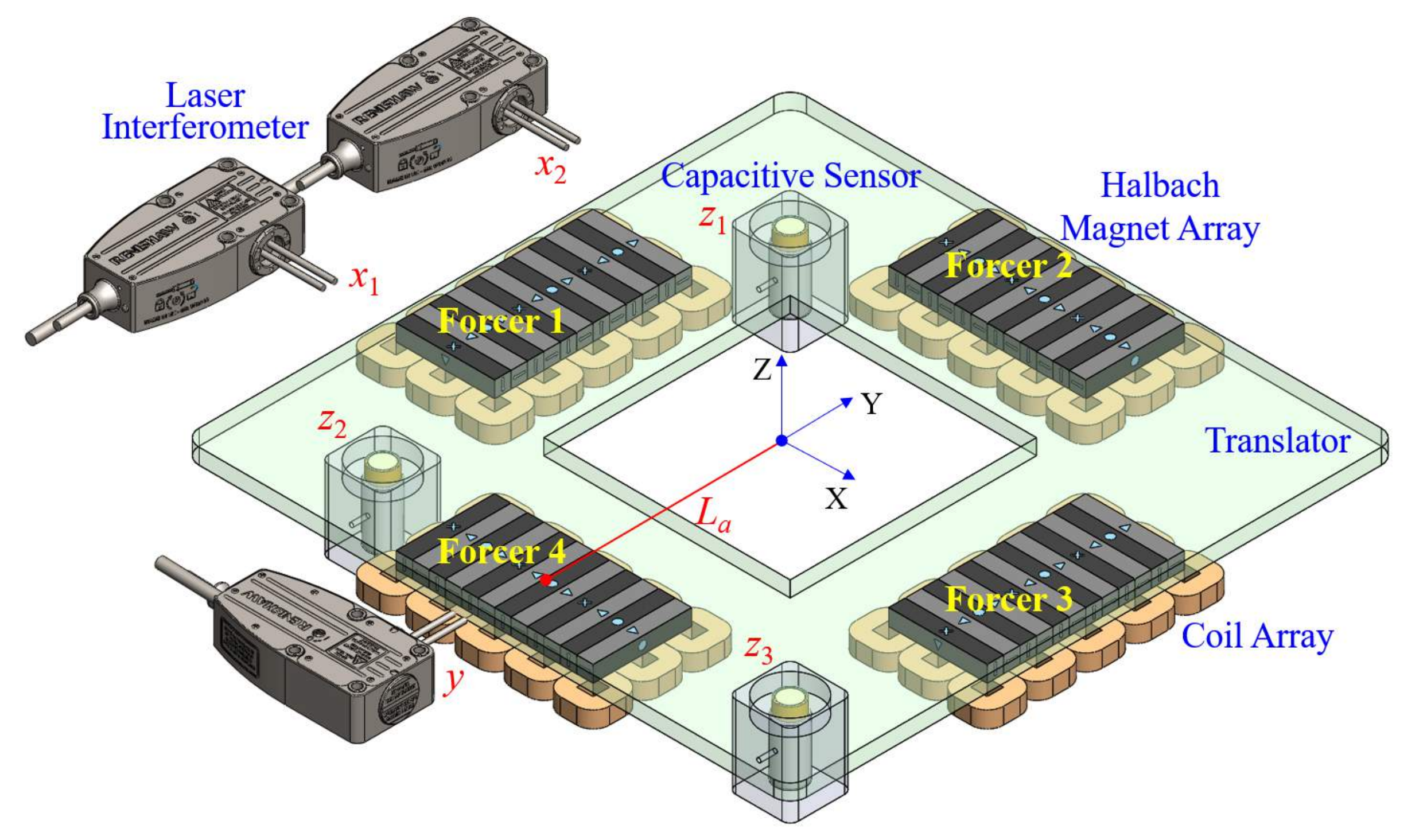}\\
	\caption{Schematics of the square-coil-based magnetically-levitated planar nanopositioning system.} \label{fig:MaglevSchematics}
\end{figure}
In this section, the magnetically-levitated planar nanopositioning system
(which is the typical prototype application
of our data-driven controller optimization approach)
is first illustrated, including its working principles and associated overall control scheme.
The design objective of our magnetically-levitated planar nanopositioning system
is to enable 6-DOF motion with low system complexity and high energy efficiency.
For large-stroke applications, the stroke expandability is also important
as well as the affordability to simultaneously operate multiple motion translators.
The schematic design of the implemented magnetically-levitated planar nanopositioning system for this work
is illustrated in Fig.~\ref{fig:MaglevSchematics}.
Although the square coil array in Fig.~\ref{fig:MaglevSchematics} is covered here in a small area for evaluation,
such a square coil based design allows suitably unlimited planar motion stroke as long as the coils spread over.
Notably, this system adopts the tiled square coil array for actuation,
which shows the comparative advantages in control complexity and energy efficiency as it only requires 8-phase for 6-DOF motion control and coils far away can be actively switched off to save energy \cite{zhu2016design}.
Furthermore, the interference between coils is minimized at the maximum extent
by using the square coil arrangement,
so that multiple translators are feasible by individually controlling each or set of coils.

\begin{figure}
	\centering
	\includegraphics[width=0.45\textwidth]{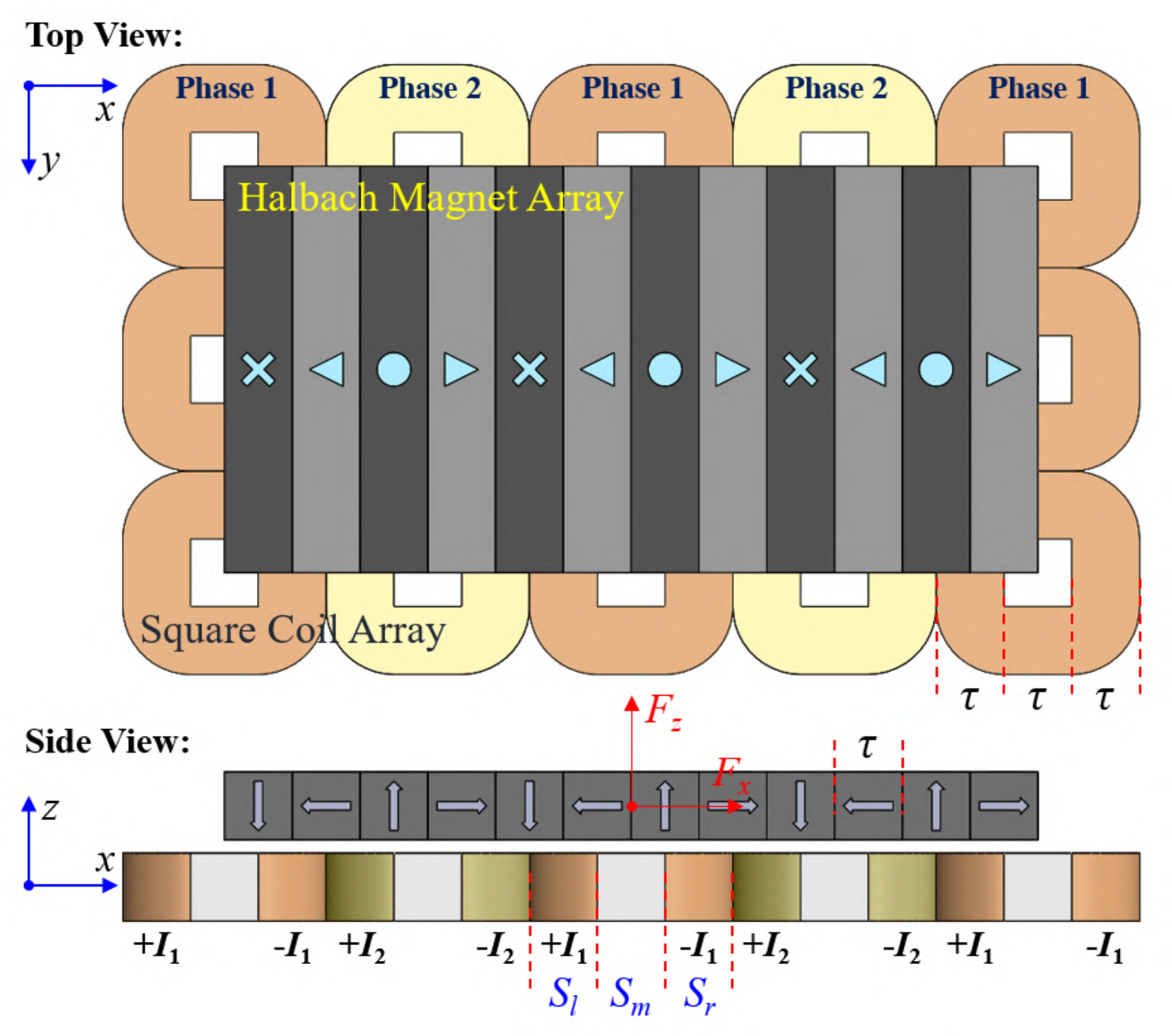}\\
	\caption{Top view and side view of the forcer in the square-coil-based magnetically-levitated stage design.} \ \label{fig:SchematicsForcer}
\end{figure}
From Fig.~\ref{fig:MaglevSchematics}, it can be seen that
the 6-DOF motion is achieved by the combined force from Forcer 1 to 4,
where each forcer can provide a vertical levitation force and horizontal thrust force.
As illustrated in Fig.~\ref{fig:SchematicsForcer},
the moving part of one forcer is a Halbach permanent magnet array
and the stationary part is a square coil array grouped into two phases.
Due to the periodic arrangement of magnetization directions indicated in Fig.~\ref{fig:SchematicsForcer}, the Halbach array generates an almost ideal sinusoidal magnetic field
in both X and Z axes except the magnet end effects. This is not achievable with the normal magnet array widely adopted in 1-DOF linear motors \cite{zhu2016design}. From the side view of Fig.~\ref{fig:SchematicsForcer},
each square coil is divided into three segments, i.e., $S_l$, $S_m$, and $S_r$.
The current directions in $S_l$ and $S_r$ are opposite
and the magnetic field directions for $S_l$ and $S_r$ are also opposite,
so that $S_l$ and $S_r$ generate identical force
in both X and Z axes. $S_m$ contributes zero force in two axes due to its current direction.

The force generation on a single square coil can be expressed via the relative location between coil and magnet arrays $(x,z)$ as $F^{c}_{x}(x,z)=K_x(x,z)I$ and $F^{c}_{z}(x,z)=K_z(x,z)I$, where $I$ is the current magnitude, and $K_x(x,z)$ and $K_z(x,z)$ are defined as,
\begin{equation}\label{KxKz}
\begin{split}
K_x(x,z)\!=&C_{f}e^{-\gamma z}\!\Big(\!-\!\sqrt{2}\tau\sin(\gamma x)\!+\!\frac{2\alpha}{\gamma}\sin(\gamma x)\cos( \frac{\gamma\tau}{2}\!+\!\beta)\!\Big), \\
K_z(x,z)\!=&C_{f}e^{-\gamma z}\!\Big(\!-\!\sqrt{2}\tau\cos(\gamma x)\!+\!\frac{2\alpha}{\gamma}\cos(\gamma x)\cos( \frac{\gamma\tau}{2}\!+\!\beta)\!\Big), \\
\end{split}
\end{equation}
where $C_{f}>0$ is a force constant, $\tau$ is the geometrical dimension as indicated in Fig.~\ref{fig:SchematicsForcer},
$\gamma$ is the spatial wave number with $\gamma=\pi/2\tau$,
and
$\alpha$ and $\beta$ are two constants numerically identified, $\alpha=1.144$ and $\beta=2.3924$.
Therefore, the total force generated by the whole forcer in Fig.~\ref{fig:SchematicsForcer} is expressed via two phases of current as,
\begin{equation}\label{FxFz}
\begin{split}
\bm{F^{f}}(x,z)
& = N
\left[
\begin{array}{cc}
K_{x}(x,z) & K_{x}(x+3\tau,z) \\
K_{z}(x,z) & K_{z}(x+3\tau,z)
\end{array}
\right]
\bm{I}
\end{split}
\end{equation}
where $\bm{F^{f}}(x,z)=[F^{f}_{x}(x,z)\ F^{f}_{x}(x,z)]^{T}$, $\bm{I}=[I_1\ I_2]^{T}$, $I_1$ and $I_2$ denote the current magnitudes in Phase 1 and Phase 2, respectively. $N$ is the number of effective coils in each phase, where $N=4$ for the case in Fig.~\ref{fig:SchematicsForcer}, and denotes
\begin{equation}\label{Phi}
\begin{split}
\bm{\Phi_{K}}(x,\,z)=
\left[
\begin{array}{cc}
K_{x}(x,z) & K_{x}(x+3\tau,z) \\
K_{z}(x,z) & K_{z}(x+3\tau,z)
\end{array}
\right]
\end{split}
.
\end{equation}

In order to control the 6-DOF motion, the global force/torque given by the controller needs to be allocated to four forcers. For each forcer, such local force is generated through energizing the two-phase current $\bm{I_{i}}=[I_{i1}\ I_{i2}]^{T}$ on each forcer. According to~\eqref{FxFz},
\begin{equation}\label{FxFzInverse}
\begin{aligned}
\bm{I_{i}}=\bm{\Phi_{K}}(x,\,z)^{-1}\bm{F^{f_{i}}}/N.
\end{aligned}
\end{equation}
Therefore, the controllability of the square coil magnetically-levitated system design is based on the invertibility of $\bm{\Phi_{K}}(x,\,z)$. It is noted that
\begin{equation}\label{Invertibility}
\begin{aligned}
&\det\Big(\bm{\Phi_{K}}(x,\,z)\Big) \\
=&\,K_{x}(x,z) K_{z}(x+3\tau,z) - K_{z}(x,z)K_{x}(x+3\tau,z)\\
=&\,\Big(C_{f}e^{-\gamma z}\Big)^2\Bigg(\Big(-\!\sqrt{2}\tau\sin(\gamma x)+\frac{2\alpha}{\gamma}\sin(\gamma x)\cos( \frac{\gamma\tau}{2}+\beta)\Big)^2 \\
& + \Big(-\!\sqrt{2}\tau\cos(\gamma x)+\frac{2\alpha}{\gamma}\cos(\gamma x)\cos( \frac{\gamma\tau}{2}+\beta)\Big)^2\Bigg)\\
=&\,\Big(C_{f}e^{-\gamma z}\Big)^2\Big(\sqrt{2}\tau-\frac{2\alpha}{\gamma}\cos( \frac{\gamma\tau}{2}+\beta)\Big)^2.
\end{aligned}
\end{equation}
Since $\gamma=\pi/2\tau$, and thus $\gamma\tau=\pi/2$, it can be seen that as long as $\cos( \beta+\pi/4)\neq{\sqrt{2}\pi}/{4\alpha}$,
$\det\Big(\bm{\Phi_{K}}(x,\,z)\Big)>0$
indicates that $\bm{\Phi_{K}}(x,\,z)$ is full rank and invertible,
with the values of the position $(x,\,z)$ not affecting this property.
Numerically, as the values of $\alpha$ and $\beta$ are known,
it is thus direct to verify that the condition $\cos( \beta+\pi/4)\neq{\sqrt{2}\pi}/{4\alpha}$ is met,
which shows that~\eqref{FxFzInverse}
has no singularity and the 6-DOF motion is fully controllable.
The 6-DOF sensing is achieved via three channels of laser interferometers $(x_1,\,x_2,\,y)$
and three channels of capacitive sensors $(z_1,\,z_2,\,z_3)$ as indicated in Fig.~\ref{fig:MaglevSchematics}.
With the measured 6-axis state-variables, each DOF can thus be closed-loop controlled as Single-Input Single-Output (SISO) systems and ready for the deployment of the algorithm in Section \ref{algorithm}.

\section{Data-Driven Multi-Objective Optimization}\label{algorithm}

As noted earlier, certain important precision motion systems such as the maglev nanopositioning system
emphasize the requirement for smooth and accurate tracking in terms of control performance.
To achieve these objectives, both the tracking accuracy and control signal variation
needs to be taken into account concurrently in the optimization.
Hence, the overall cost function in this paper is defined as
\begin{equation}
J(^\mathbf{i}\rho)=\underbrace{w_1e(^\mathbf{i}\rho)^T \cdot e(^\mathbf{i}\rho)}_{J_e}+\underbrace{w_2\dot u(^\mathbf{i}\rho)^T \cdot \dot u(^\mathbf{i}\rho)}_{J_{\dot u}},\label{eq:J}
\end{equation}
where $^\mathbf{i}\rho$ is the controller parameter vector in the $\mathbf{i}^{\textrm{th}}$ iteration,
and $J(^\mathbf{i}\rho)$ is the total cost function consisting
of the tracking related cost function $J_e$ and control variation related cost function $J_{\dot u}$.
Here, $w_1$ is the weighting for the tracking performance
wherein $e(^\mathbf{i}\rho)$ is the tracking error measured in the $\mathbf{i}^{\textrm{th}}$ iteration; $w_2$ is the weighting for the control variation
wherein $u(^\mathbf{i}\rho)$ is the control input
and $\dot u(^\mathbf{i}\rho)$ is the variation of control input.
Thus consider the typical feedback control system for the magnetically-levitated system
as in Fig. \ref{fig:Overview},
where a fixed structure controller $C(s)$ is used for motion control
and can be expressed as
\begin{equation}
C(s,\rho)=\rho^T \bar C(s).\label{eq:Fixed_structure_C}
\end{equation}
Here, $\rho$ is a vector of the controller parameters to be optimized
and $\bar C(s)$ is a vector of parameter independent transfer functions.
We can now formulate the data-driven multi-objective optimization problem as:
\begin{problem}
	Assume the motion system is unknown and controlled by a fixed structure controller $C(s,\rho)$ in \eqref{eq:Fixed_structure_C};
use only the closed-loop experimental data
to determine the parameter vector $\rho$ that minimizes
the multi-objective cost function $J(\rho)$ \eqref{eq:J}, i.e., to find
	\begin{equation}
	\rho^\star= \textrm{arg}\: \underset{\rho}{\textrm{min}} \:J(\rho).\label{eq:probformulation}
	\end{equation}
\end{problem}

\subsection{Gradient Calculation and Estimation}
With equation \eqref{eq:J}, the gradient of the cost function $J(^\mathbf{i}\rho)$ with respect to the parameter in the $\textrm{i}^\textrm{th}$ iteration $^\mathbf{i}\rho$ can be derived as
\begin{align}\label{eq:Gradient}
\nabla J(^\mathbf{i}\rho)&=2w_1[\nabla \: {^\mathbf{i}e}(^\mathbf{i}\rho)]^T \cdot {^\mathbf{i}e}(^\mathbf{i}\rho) \nonumber \\
&\quad +2w_2[\nabla \: {^\mathbf{i}\dot u}(^\mathbf{i}\rho)]^T \cdot {^\mathbf{i}\dot u}(^\mathbf{i}\rho),
\end{align}
and the Hessian of the cost function can be approximated as
\begin{align}\label{eq:Hessian}
\nabla^2 J(^\mathbf{i}\rho)&=2w_1[\nabla \: {^\mathbf{i}e}(^\mathbf{i}\rho)]^{T}\cdot \nabla \: {^\mathbf{i}e}(^\mathbf{i}\rho) \nonumber \\
&\quad +2w_2[\nabla \: {^\mathbf{i}\dot u}(^\mathbf{i}\rho)]^{T}\cdot \nabla \: {^\mathbf{i}\dot u}(^\mathbf{i}\rho).
\end{align}
The purpose of obtaining the gradient and the Hessian of the cost function is to apply the Newton's optimization algorithm \cite{boyd2004convex}:
\begin{equation}
{{^\mathbf{i+1}\rho}}={^\mathbf{i}\rho} - {^\mathbf{i}\gamma}(\nabla^2 J(^\mathbf{i}\rho))^{-1} \: \nabla J(^\mathbf{i}\rho).\label{eq:Newton}
\end{equation}
where ${{^\mathbf{i+1}\rho}}$ is the updated parameter value for iteration $\mathbf{i+1}$ and $^\mathbf{i}\gamma$ is the step size at iteration $\mathbf{i}$. From \eqref{eq:Gradient} and \eqref{eq:Hessian}, the Newton's optimization algorithm requires $\nabla \: {^\mathbf{i}e}(^\mathbf{i}\rho)$, $\nabla \: {^\mathbf{i}\dot u}(^\mathbf{i}\rho)$, ${^\mathbf{i}e}(^\mathbf{i}\rho)$ and ${^\mathbf{i}\dot u}(^\mathbf{i}\rho)$. ${^\mathbf{i}e}(^\mathbf{i}\rho)$ and ${^\mathbf{i}\dot u}(^\mathbf{i}\rho)$ can be obtained directly from the sensor measurement and the control software. However, $\nabla \: {^\mathbf{i}e}(^\mathbf{i}\rho)$ and $\nabla \: {^\mathbf{i}\dot u}(^\mathbf{i}\rho)$ cannot be obtained directly and have to be estimated with the input-output data collected from the closed-loop experiments.
The gradient of the tracking error can be derived as:
\begin{align}\label{eq:gradient_of_e}
\nabla {^\mathbf{i}e}(^\mathbf{i}\rho)=\frac{-P\frac{\partial C(^\mathbf{i}\rho)}{\partial ^\mathbf{i}\rho}} {[1+PC(^\mathbf{i}\rho)]^2}\cdot r\nonumber
= -\frac{P\frac{\partial C(^\mathbf{i}\rho)}{\partial ^\mathbf{i}\rho}}{1+PC(^\mathbf{i}\rho)} \cdot {^\mathbf{i}e}(^\mathbf{i}\rho).
\end{align}
Inspired by the IFT approach \cite{hjalmarsson2002iterative}, $\nabla {^\mathbf{i}e}(^\mathbf{i}\rho)$ can then be obtained by setting ${^\mathbf{i}e}(^\mathbf{i}\rho)$ as the new reference $r$ in the ``special" experiment, and we have
\begin{equation}
\nabla {^\mathbf{i}e}(^\mathbf{i}\rho)=-\frac{\partial C(^\mathbf{i}\rho)}{\partial ^\mathbf{i}\rho}\cdot \frac{1}{C(^\mathbf{i}\rho)} \cdot y_s,\label{eq:gradient_of_e_final}
\end{equation}
where $y_s$ denotes the position measurement for this experiment. Apart from $\nabla {^\mathbf{i}e}(^\mathbf{i}\rho)$, the gradient of ${^\mathbf{i}\dot u}(^\mathbf{i}\rho)$ can also be derived as
\begin{align}\label{eq:gradient_of_udot}
\nabla {^\mathbf{i}\dot u}(^\mathbf{i}\rho)&=\frac{\frac{\partial C(^\mathbf{i}\rho)}{\partial ^\mathbf{i}\rho}[1+PC(^\mathbf{i}\rho)]}{[1+PC(^\mathbf{i}\rho)]^2}\cdot \dot r -\frac{P\frac{\partial C(^\mathbf{i}\rho)}{\partial ^\mathbf{i}\rho}C(^\mathbf{i}\rho)}{[1+PC(^\mathbf{i}\rho)]^2} \cdot \dot r \nonumber \\
&= \frac{\partial C(^\mathbf{i}\rho)}{\partial ^\mathbf{i}\rho}\frac{1}{1+PC(^\mathbf{i}\rho)}\cdot \dot e
\end{align}
$\nabla {^\mathbf{i}\dot u}(^\mathbf{i}\rho)$ can be estimated with the same special experiment by feeding in ${^\mathbf{i}e}(^\mathbf{i}\rho)$ as the reference $r$
\begin{equation}
\nabla {^\mathbf{i}\dot u}(^\mathbf{i}\rho)=\frac{\partial C(^\mathbf{i}\rho)}{\partial ^\mathbf{i}\rho}\cdot \frac{1}{C(^\mathbf{i}\rho)} \cdot \dot u_s,\label{eq:gradient_of_udot_final}
\end{equation}
where $u_s$ denotes the control input of this special experiment. Notice that $\nabla {^\mathbf{i}e}(^\mathbf{i}\rho)$ and $\nabla {^\mathbf{i}\dot u}(^\mathbf{i}\rho)$ can be estimated solely based on the experimental data. In addition, ${^\mathbf{i}e}(^\mathbf{i}\rho)$ and ${^\mathbf{i}\dot u}(^\mathbf{i}\rho)$ can be directly obtained or calculated based on the sensor measurement and control software. Hence, the gradient $\nabla J(^\mathbf{i}\rho)$ and Hessian $\nabla^2 J(^\mathbf{i}\rho)$ of the cost function can also be estimated according to \eqref{eq:Gradient} and \eqref{eq:Hessian}. It should be noted, as will be discussed in Section \ref{exp} and \ref{unbiase}, that an additional normal experiment needs to be conducted in order to obtain an unbiased estimate of the gradient when the measurement noise is taken into consideration.
\subsection{Data Collection}\label{exp}
To make the data-driven optimization procedure clearer, all the experiments needed and data to be collected within a single iteration are listed below.
\begin{itemize}
	\item Experiment I: Normal experiment.
	\begin{eqnarray}
	&r^\mathbf{1}=r, \label{eq:E1_1}\\
	&y^\mathbf{1}=\frac{PC(^\mathbf{i}\rho)}{1+PC(^\mathbf{i}\rho)}\cdot r^\mathbf{1}- \frac{1}{1+PC(^\mathbf{i}\rho)} \cdot n^\mathbf{1},\\
	&e^\mathbf{1}=\frac{1}{1+PC(^\mathbf{i}\rho)}\cdot r^\mathbf{1} + \frac{1}{1+PC(^\mathbf{i}\rho)} \cdot n^\mathbf{1}.
	\end{eqnarray}
	\item Experiment II: Special experiment.
	\begin{eqnarray}
	&r^\mathbf{2}= e^\mathbf{1},\label{eq:E2_1}\\
	&y_s=y^\mathbf{2}= \frac{PC(^\mathbf{i}\rho)}{1+PC(^\mathbf{i}\rho)}\cdot e^\mathbf{1} - \frac{1}{1+PC(^\mathbf{i}\rho)} \cdot n^\mathbf{2},\\
	&u_s=u^\mathbf{2}= \frac{C(^\mathbf{i}\rho)}{1+PC(^\mathbf{i}\rho)}\cdot e^\mathbf{1}+\frac{C(^\mathbf{i}\rho)}{1+PC(^\mathbf{i}\rho)}\cdot n^\mathbf{2} .
	\end{eqnarray}
	\item Experiment III: Normal experiment.
	\begin{eqnarray}
	&r^\mathbf{3}=r,\label{eq:E3_1}\\
	&e^\mathbf{3}=\frac{1}{1+PC(^\mathbf{i}\rho)}\cdot r^\mathbf{3} + \frac{1}{1+PC(^\mathbf{i}\rho)} \cdot n^\mathbf{3}.\\
	&u^\mathbf{3}= \frac{C(^\mathbf{i}\rho)}{1+PC(^\mathbf{i}\rho)}\cdot r^\mathbf{3}+\frac{C(^\mathbf{i}\rho)}{1+PC(^\mathbf{i}\rho)}\cdot n^\mathbf{3}.
	\end{eqnarray}
\end{itemize}

The bold right superscript refers to the experiment index within a single iteration. In Experiment I, the normal operation with, e.g., a S-curve trajectory, is conducted while $y^\mathbf{1}$ is measured and used to generate $e^\mathbf{1}$ as the reference of Experiment II. In Experiment II, measurement of $y_s$ and $u_s$ is taken and it is then used to obtain $\nabla {^\mathbf{i}e}(^\mathbf{i}\rho)$ and $\nabla {^\mathbf{i}\dot u}(^\mathbf{i}\rho)$ according to \eqref{eq:gradient_of_e_final} and \eqref{eq:gradient_of_udot_final}. In Experiment III, measurement of $e^\mathbf{3}$ and $u^\mathbf{3}$ is taken and used to calculate the cost function gradient $\nabla J(^\mathbf{i}\rho)$. The complete data-driven multi-objective optimization algorithm can be summarized in \textbf{Algorithm 1}. It is worth noting that, similar to the IFT and many other algorithms inspired by the IFT, there is no strong guarantee (proofs) for robust stability throughout the iterations, due to the lack of the system model. Hence, as also suggested in\mbox{ \cite{hjalmarsson2001robust}}, we shall use cautious updates, i.e., use small step-sizes, especially during the first iterations.

\begin{algorithm}[H]
	\caption{Data-Driven Multi-Objective Controller Optimization Algorithm}
	\begin{enumerate}
		\item Set the iteration number $\mathbf{i}=0$ and select the initial controller parameter ${^\mathbf{0}\rho}$.
		\item Conduct Experiment I and measure the output $y^\mathbf{1}$ and tracking error $e^\mathbf{1}$.
		\item Evaluate the cost function $J({^\mathbf{i}\rho}).$
		Stop if the cost function value is satisfactory. Otherwise, proceed to Step 4.
		\item Conduct Experiment II and measure the output $y^\mathbf{2}$ from this special experiment.
		\item Obtain $\nabla \: {^\mathbf{i}e}({^\mathbf{i}\rho})$ and $\nabla {^\mathbf{i}\dot u}(^\mathbf{i}\rho)$ according to \eqref{eq:gradient_of_e_final} and \eqref{eq:gradient_of_udot_final} respectively.
		\item Conduct Experiment III and measure $e^\mathbf{3}$ and $u^\mathbf{3}$.
		\item Compute $\nabla J({^\mathbf{i}\rho})$ as well as $\nabla^2 J({^\mathbf{i}\rho})$ according to \eqref{eq:Gradient} and \eqref{eq:Hessian}, where ${^\mathbf{i}e}({^\mathbf{i}\rho})$, ${^\mathbf{i}\dot{u}}({^\mathbf{i}\rho})$ are obtained from Experiment III and $\nabla \: {^\mathbf{i}e}({^\mathbf{i}\rho})$,$\nabla \: {^\mathbf{i}\dot{u}}({^\mathbf{i}\rho})$ are obtained from Step 5.
		\item Execute the Gauss-Newton algorithm \eqref{eq:Newton}, and update the controller parameters.
		\item Set the iteration number $\mathbf{i}\gets\mathbf{i}+1$ and proceed to Step 2.
	\end{enumerate}
	\label{algo1}
\end{algorithm}
\subsection{Unbiasedness of the Gradient Estimation}\label{unbiase}
The cost function gradient is estimated using the closed-loop experiment data, so the measurement noises can potentially lead to errors during this estimation. For this stochastic approximation method to work, the gradient estimation has to be unbiased, mathematically
\begin{eqnarray}
E\{\textrm{est}[\nabla J(^\mathbf{i}\rho)]\} = \nabla J(^\mathbf{i}\rho).
\label{eq:Unbiase_def}
\end{eqnarray}
To prove the unbiasedness, we have the following assumptions:
\begin{assumption} \label{ass:1}
	Noises $n$ in different experiments are independent from each other.
\end{assumption}
\begin{assumption} \label{ass:2}
	Noises $n$ are zero mean, weakly stationary random variables.
\end{assumption}

\begin{theorem} \label{th:1}
	For the motion system under the feedback control configuration as shown in Fig. \ref{fig:Overview}, with Assumption \ref{ass:1} and Assumption \ref{ass:2}, the estimation of the gradient of the cost function $J$ in \eqref{eq:J} is unbiased.
\end{theorem}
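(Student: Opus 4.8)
The plan is to substitute the measured signals of the three experiments into the estimator of $\nabla J(^\mathbf{i}\rho)$, split every measured quantity into a deterministic noise-free part plus a part that is linear in the measurement noise, and then show that taking the expectation annihilates all noise-dependent contributions so that only the true gradient \eqref{eq:Gradient} survives, thereby verifying \eqref{eq:Unbiase_def}.

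First I would assemble the estimate by pairing the gradient estimates $\nabla {^\mathbf{i}e}\,\text{(est)}$ and $\nabla {^\mathbf{i}\dot u}\,\text{(est)}$ — obtained from the special Experiment II through \eqref{eq:gradient_of_e_final} and \eqref{eq:gradient_of_udot_final}, and therefore driven by $y_s$ and $\dot u_s$ — with the signals $e^\mathbf{3}$ and $\dot u^\mathbf{3}$ measured in the normal Experiment III. Substituting the data expressions and using $r^\mathbf{2}=e^\mathbf{1}$ (so that $y_s$ and $\dot u_s$ inherit the noise $n^\mathbf{1}$ through $e^\mathbf{1}$ in addition to $n^\mathbf{2}$), each factor decomposes as a noise-free term plus a term linear in noise, where the noise-free parts coincide exactly with the true $\nabla {^\mathbf{i}e}$, $\nabla {^\mathbf{i}\dot u}$ and the true $e$, $\dot u$. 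The key structural observation is that $\nabla {^\mathbf{i}e}\,\text{(est)}$ and $\nabla {^\mathbf{i}\dot u}\,\text{(est)}$ depend only on $n^\mathbf{1}$ and $n^\mathbf{2}$, whereas $e^\mathbf{3}$ and $\dot u^\mathbf{3}$ depend only on $n^\mathbf{3}$.

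Next I would expand, for the tracking term, the product $[\nabla {^\mathbf{i}e}\,\text{(est)}]^T e^\mathbf{3}$ into four pieces: noise-free times noise-free, noise-free times the $n^\mathbf{3}$-part, the $(n^\mathbf{1},n^\mathbf{2})$-part times noise-free, and the $(n^\mathbf{1},n^\mathbf{2})$-part times the $n^\mathbf{3}$-part. The two pieces that are linear in a single noise vanish in expectation by Assumption \ref{ass:2} (zero mean). The remaining cross piece is dispatched using Assumption \ref{ass:1}: since $n^\mathbf{3}$ is independent of $\{n^\mathbf{1},n^\mathbf{2}\}$, its expectation factorizes and, being zero mean, vanishes as well. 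The identical decomposition applies to $[\nabla {^\mathbf{i}\dot u}\,\text{(est)}]^T \dot u^\mathbf{3}$. Summing the two weighted contributions then leaves precisely $\nabla J(^\mathbf{i}\rho)$.

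I expect the main obstacle to be the careful bookkeeping of which noise realization enters which factor, and in particular the justification for discarding the quadratic cross term — this is exactly why the algorithm runs a separate Experiment III instead of reusing $e^\mathbf{1}$. Had the signal $e$ been reused from Experiment I, it would share the noise $n^\mathbf{1}$ with the gradient estimate, producing a term quadratic in $n^\mathbf{1}$ whose expectation is a nonzero variance and hence a systematic bias. The entire argument therefore hinges on invoking independence (Assumption \ref{ass:1}) precisely at the Experiment-II-versus-Experiment-III interface, so that the dangerous cross term collapses to zero in expectation.
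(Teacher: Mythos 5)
Your proposal is correct and follows essentially the same route as the paper's own proof: you decompose the Experiment-II gradient estimates into the true gradients plus noise terms driven by $n^\mathbf{1}$ and $n^\mathbf{2}$, pair them with the Experiment-III signals carrying only $n^\mathbf{3}$, and kill the cross terms in expectation via Assumption~\ref{ass:1} (independence, so the expectation factorizes) and Assumption~\ref{ass:2} (zero mean), exactly as in \eqref{eq:unbiase}--\eqref{eq:unbiase_J}. Your closing observation about why reusing $e^\mathbf{1}$ would introduce a variance-induced bias is also precisely the remark the paper makes immediately after its proof.
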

\begin{proof}
From \eqref{eq:gradient_of_e_final}, the estimated gradient of $e$ is given by

\begin{eqnarray}
\textrm{est}[\nabla {^\mathbf{i}e}(^\mathbf{i}\rho)]&=&-\frac{\partial C(^\mathbf{i}\rho)}{\partial\, ^\mathbf{i}\rho}\cdot \frac{P}{1+PC(^\mathbf{i}\rho)} \cdot e^\mathbf{1}\nonumber\\
&&-\frac{\partial C(^\mathbf{i}\rho)}{\partial\, ^\mathbf{i}\rho}\cdot \frac{P}{[1+PC(^\mathbf{i}\rho)]^2}\cdot n^\mathbf{1} \nonumber\\
&&+\frac{\partial C(^\mathbf{i}\rho)}{\partial\, ^\mathbf{i}\rho}\cdot \frac{1}{C(^\mathbf{i}\rho)} \cdot \frac{1}{1+PC(^\mathbf{i}\rho)} \cdot n^\mathbf{2}  \nonumber\\
&=& \nabla {^\mathbf{i}e}(^\mathbf{i}\rho) + w_e,
\label{eq:unbiase}
\end{eqnarray}
where
\begin{eqnarray}
w_e&\equiv& -\frac{\partial C(^\mathbf{i}\rho)}{\partial\, ^\mathbf{i}\rho}\cdot \frac{P}{[1+PC(^\mathbf{i}\rho)]^2}\cdot n^\mathbf{1} \nonumber\\
&&+\frac{\partial C(^\mathbf{i}\rho)}{\partial\, ^\mathbf{i}\rho}\cdot \frac{1}{C(^\mathbf{i}\rho)} \cdot \frac{1}{1+PC(^\mathbf{i}\rho)} \cdot n^\mathbf{2}.
\end{eqnarray}
Notice that $w_e$ contains noises from Experiment I and Experiment II and $e^\mathbf{3}$ contains only the noises from Experiment III. With Assumption \ref{ass:1} and Assumption \ref{ass:2}, we have
\begin{eqnarray}
E[{w_e}^T \cdot e^\mathbf{3}(^\mathbf{i}\rho)]=E[{w_e}^T] \cdot E[e^\mathbf{3}(^\mathbf{i}\rho)],\label{eq:independent}\end{eqnarray}
and
\begin{eqnarray}
E[{w_e}^T]=0.\label{eq:zeroexp}
\end{eqnarray}
Similar results can be obtained for $\dot u$ from \eqref{eq:gradient_of_udot_final}. The expectation of the estimation of the cost function gradient can be derived as follows
\begin{align}
E\{\textrm{est}[\nabla &J(^\mathbf{i}\rho)]\}\nonumber\\
=&2w_1E\{\textrm{est}[\nabla e^T(^\mathbf{i}\rho)]e^\mathbf{3}(^\mathbf{i}\rho)\}\nonumber\\
&+2w_2E\{\textrm{est}[\nabla {\dot u}^T(^\mathbf{i}\rho)]{\dot u}^\mathbf{3}(^\mathbf{i}\rho)\}\nonumber\\
=& 2w_1E[\nabla e^T(^\mathbf{i}\rho)e^\mathbf{3}(^\mathbf{i}\rho)]+2w_1E[{w_e}^T \cdot e^\mathbf{3}(^\mathbf{i}\rho)]\nonumber\\
&+2w_2E[\nabla {\dot u}^T(^\mathbf{i}\rho){\dot u}^\mathbf{3}(^\mathbf{i}\rho)]+2w_2E[{w_{\dot u}}^T \cdot {\dot u}^\mathbf{3}(^\mathbf{i}\rho)]\nonumber\\
=& \nabla J(^\mathbf{i}\rho)+0 \cdot E[e^\mathbf{3}(^\mathbf{i}\rho)]+0 \cdot E[{\dot u}^\mathbf{3}(^\mathbf{i}\rho)]\nonumber\\
=& \nabla J(^\mathbf{i}\rho).
\label{eq:unbiase_J}
\end{align}

This completes the proof of the Theorem.
\end{proof}

From the proof, it can be noticed that Experiment III is indeed necessary in order to guarantee the unbiasedness of cost function gradient estimation. If the data from Experiment I were used, i.e., $e^\mathbf{1}(^\mathbf{i}\rho)$ and ${\dot u}^\mathbf{1}(^\mathbf{i}\rho)$ instead of $e^\mathbf{3}(^\mathbf{i}\rho)$ and ${\dot u}^\mathbf{3}(^\mathbf{i}\rho)$, the same noise would exist in both $\textrm{est}[\nabla e^T(^\mathbf{i}\rho)]$ and $e(^\mathbf{i}\rho)$ (as well as in $\textrm{est}[\nabla {\dot u}^T(^\mathbf{i}\rho)]$ and ${\dot u}(^\mathbf{i}\rho)$). This would lead to a biased estimation for the cost function gradient and it is exactly the reason why Experiment III is needed.

\section{Experimental Validation}
\begin{figure}
	\centering
	\includegraphics[width=0.45\textwidth]{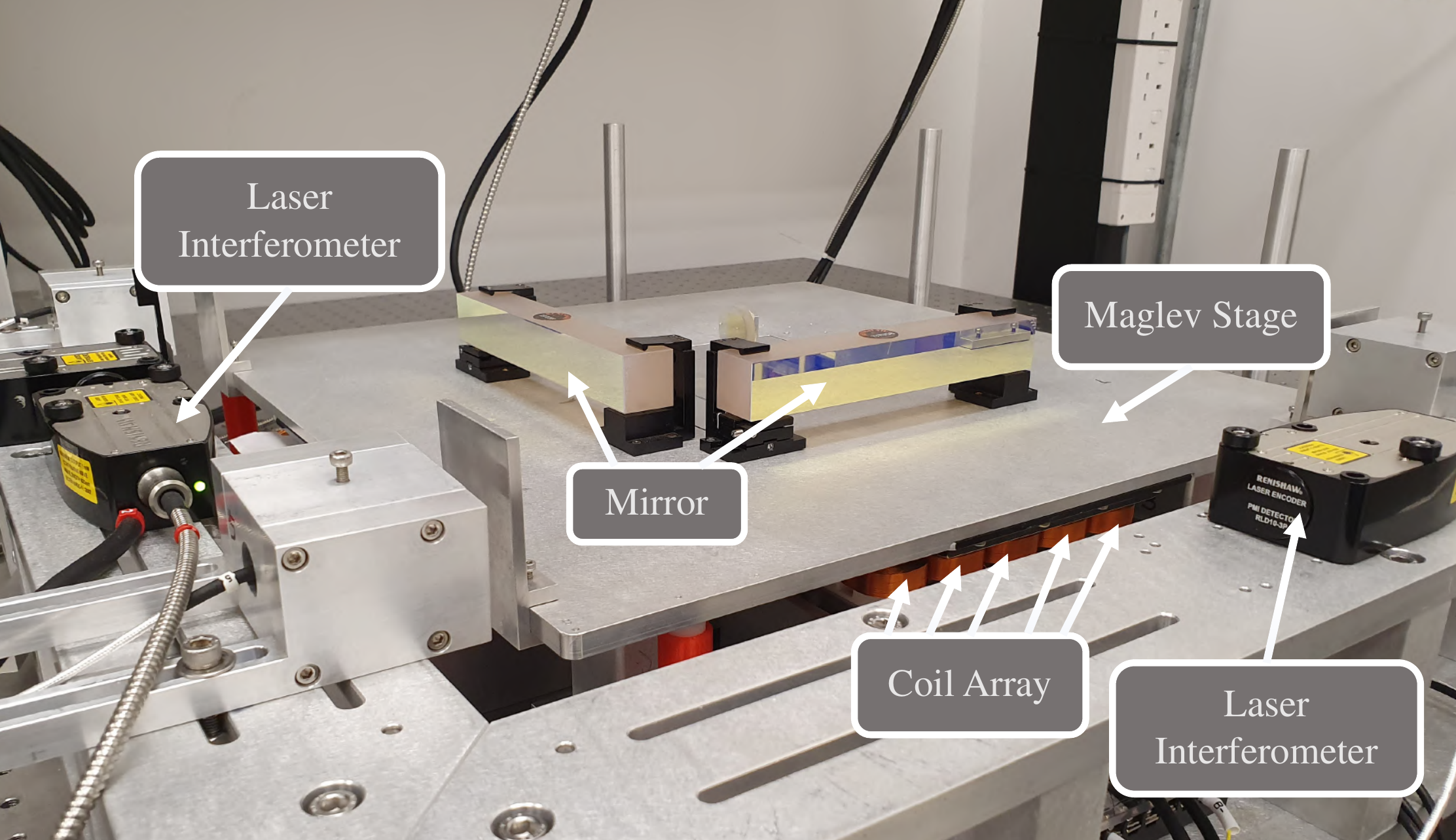}\\
	\caption{Magnetically-levitated nanopositioning system used in the experimental validation.}\label{fig:Maglevpic}
\end{figure}
\begin{figure}
	\centering
	\includegraphics[width=0.45\textwidth]{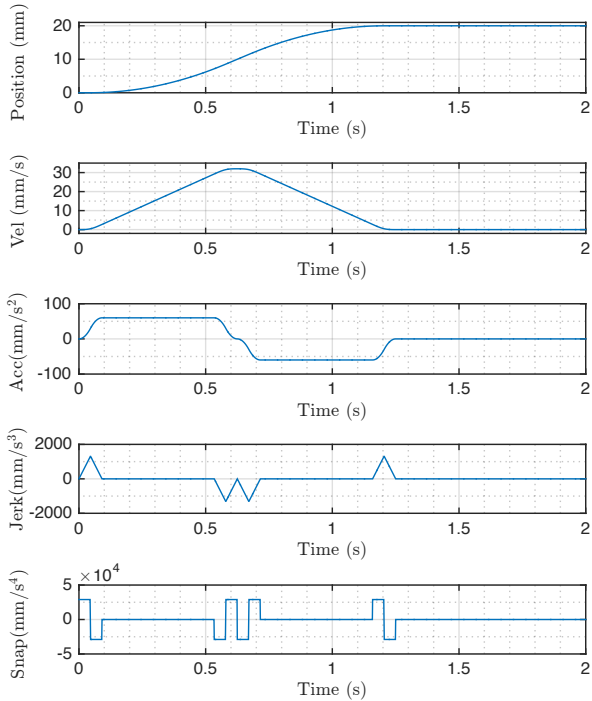}\\
	\caption{Fourth-order S-curve motion profile used in the real-time experiment.}\label{fig:Scurve}
\end{figure}

\begin{table*}[t]
	\caption{Overview of the controller parameters for Y and X axis}
	\begin{center}
		\begin{tabular}{ c|c|c|c|c }
			\multirow{2}{*}{Parameters} &  \multicolumn{2}{c|}{Y Axis} & \multicolumn{2}{c}{X Axis}\\
			\cline{2-5}
			& Before Optimization & After Optimization & Before Optimization & After Optimization\\
			\hline\hline
			$K_p$ & $30$ & $25.1221$ & $30$ & $23.1781$ \\
			
			$T_i$ & $0.002$ & $2.8459\times10^{-4}$ & $0.002$ & $3.8538\times10^{-4}$ \\
			
			$T_d$ & $0.00012$ & $1.3490\times10^{-4}$ & $0.00012$ & $2.3865\times10^{-4}$ \\
			
			\hline
		\end{tabular}
	\end{center}
	\label{table:para}
\end{table*}

\begin{table*}[t]
	\caption{Overview of the cost functions for Y and X axis}
	\begin{center}
		\begin{tabular}{ c|c|c|c|c }
			\multirow{2}{*}{Cost functions} &  \multicolumn{2}{c|}{Y Axis} & \multicolumn{2}{c}{X Axis}\\
			\cline{2-5}
			& Before Optimization & After Optimization & Before Optimization & After Optimization\\
			\hline\hline
			Total cost $J$ & $1.3892\times10^{8}$ & $2.4833\times10^{7}$ & $2.8850\times10^{6}$ & $6.1624\times10^{5}$ \\
			
			Tracking cost $J_e$ & $1.0890\times10^{8}$ & $5.6159\times10^{6}$ & $2.3326\times10^{6}$ & $3.9377\times10^{5}$\\
			
			Control variation cost $J_{\dot u}$ & $3.0017\times10^{7}$ & $1.9217\times10^{7}$ & $5.5240\times10^{5}$ & $2.2247\times10^{5}$\\
			\hline
		\end{tabular}
	\end{center}
	\label{table:cost}
\end{table*}

\begin{figure}
	\centering
	\includegraphics[width=0.45\textwidth]{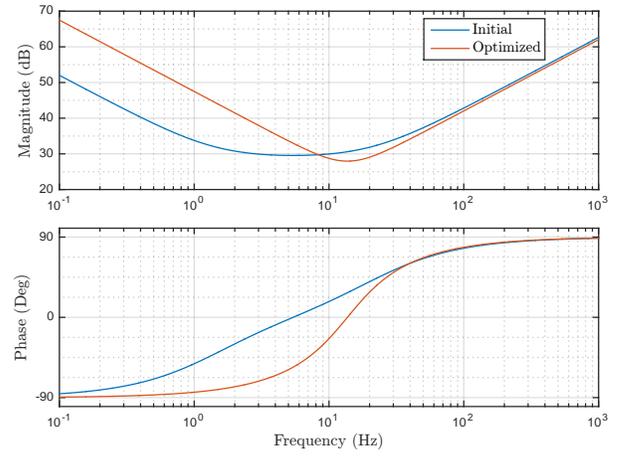}\\
	\caption{Y axis controller $C(s,\rho)$ comparison before and after the optimization in the frequency domain.}\label{fig:Controller}
\end{figure}
\begin{figure}
	\centering
	\includegraphics[width=0.43\textwidth]{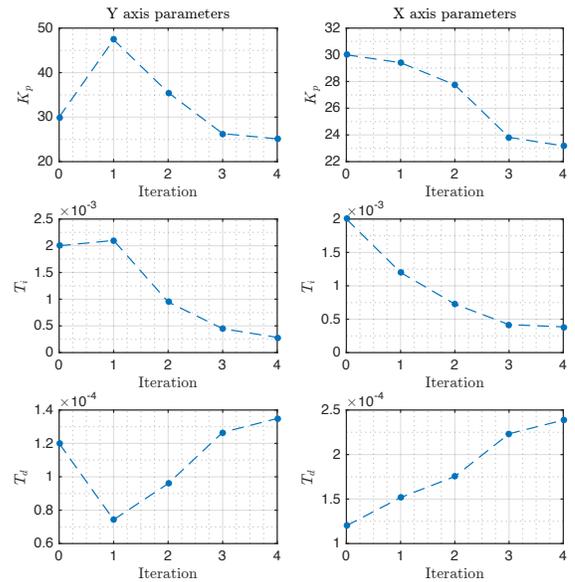}\\
	\caption{Y axis and X axis controller parameter convergence diagram.}\label{fig:XYPara}
\end{figure}
\begin{figure}
	\centering
	\includegraphics[width=0.43\textwidth]{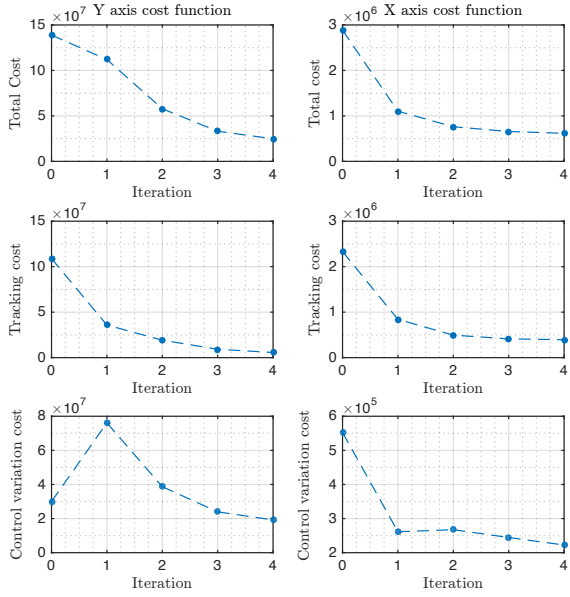}\\
	\caption{Y axis and X axis cost function convergence diagram. Top: Overall Cost. Middle: Cost related to tracking error. Bottom: Cost related to control signal variation}\label{fig:XYCost}
\end{figure}
\begin{figure}
	\centering
	\includegraphics[width=0.45\textwidth]{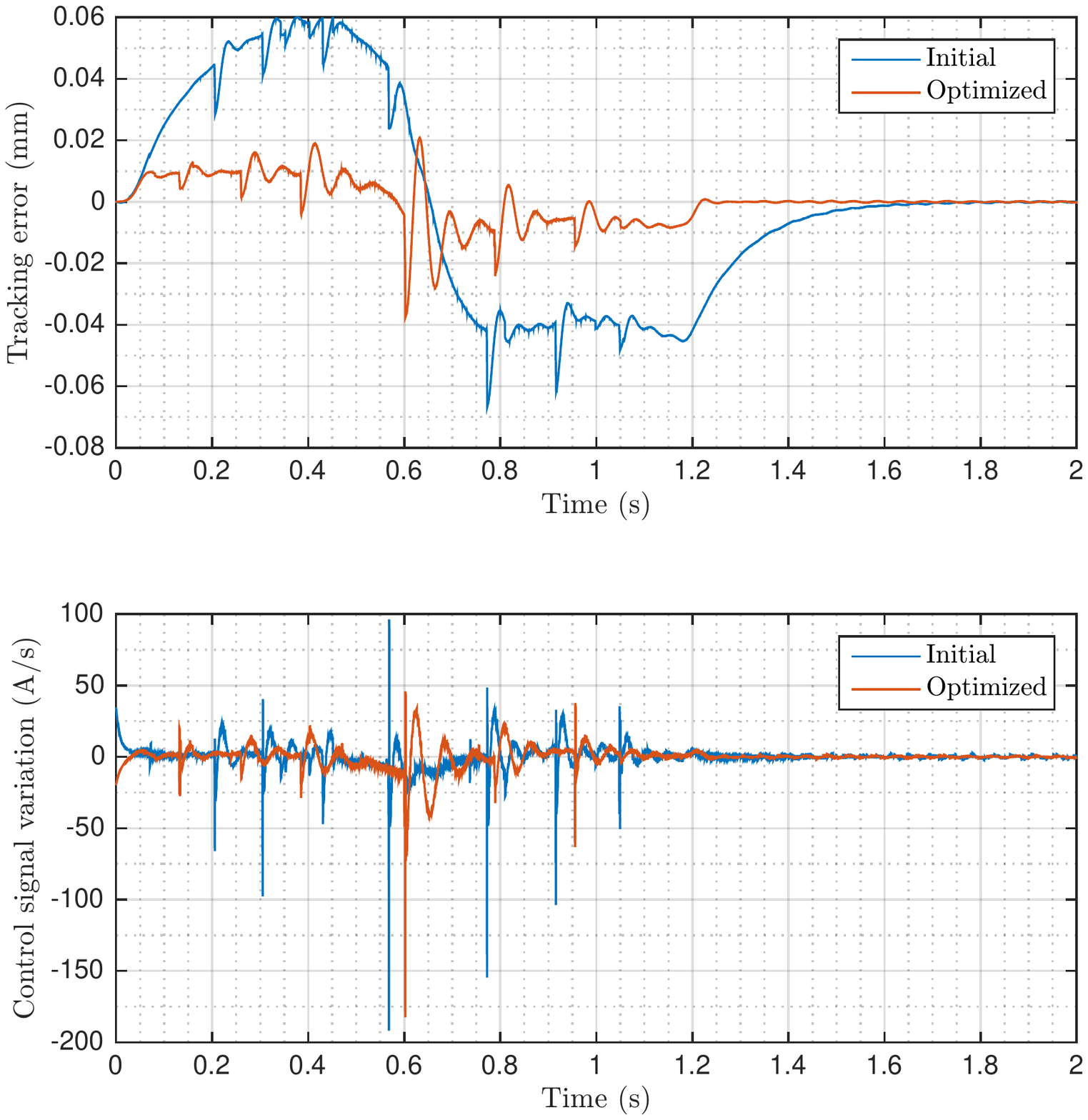}\\
	\caption{Y axis tracking error and control signal variation comparison before and after the data-driven multi-objective optimization.}\label{fig:YTracking}
\end{figure}
\begin{figure}
	\centering
	\includegraphics[width=0.45\textwidth]{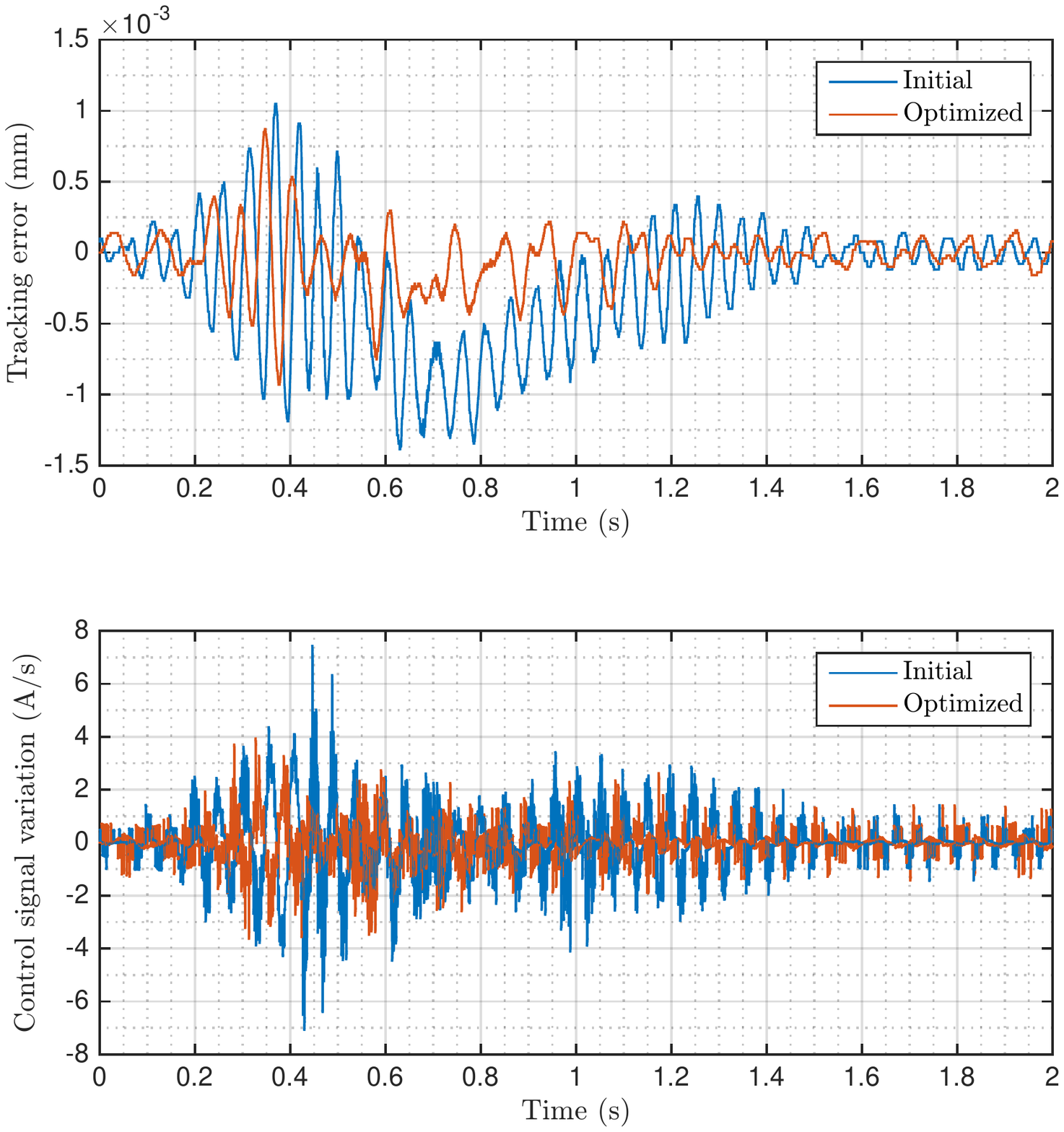}\\
	\caption{X axis tracking error and control signal variation comparison before and after the data-driven multi-objective optimization.}\label{fig:XTracking}
\end{figure}
\begin{figure}
	\centering
	\includegraphics[width=0.44\textwidth]{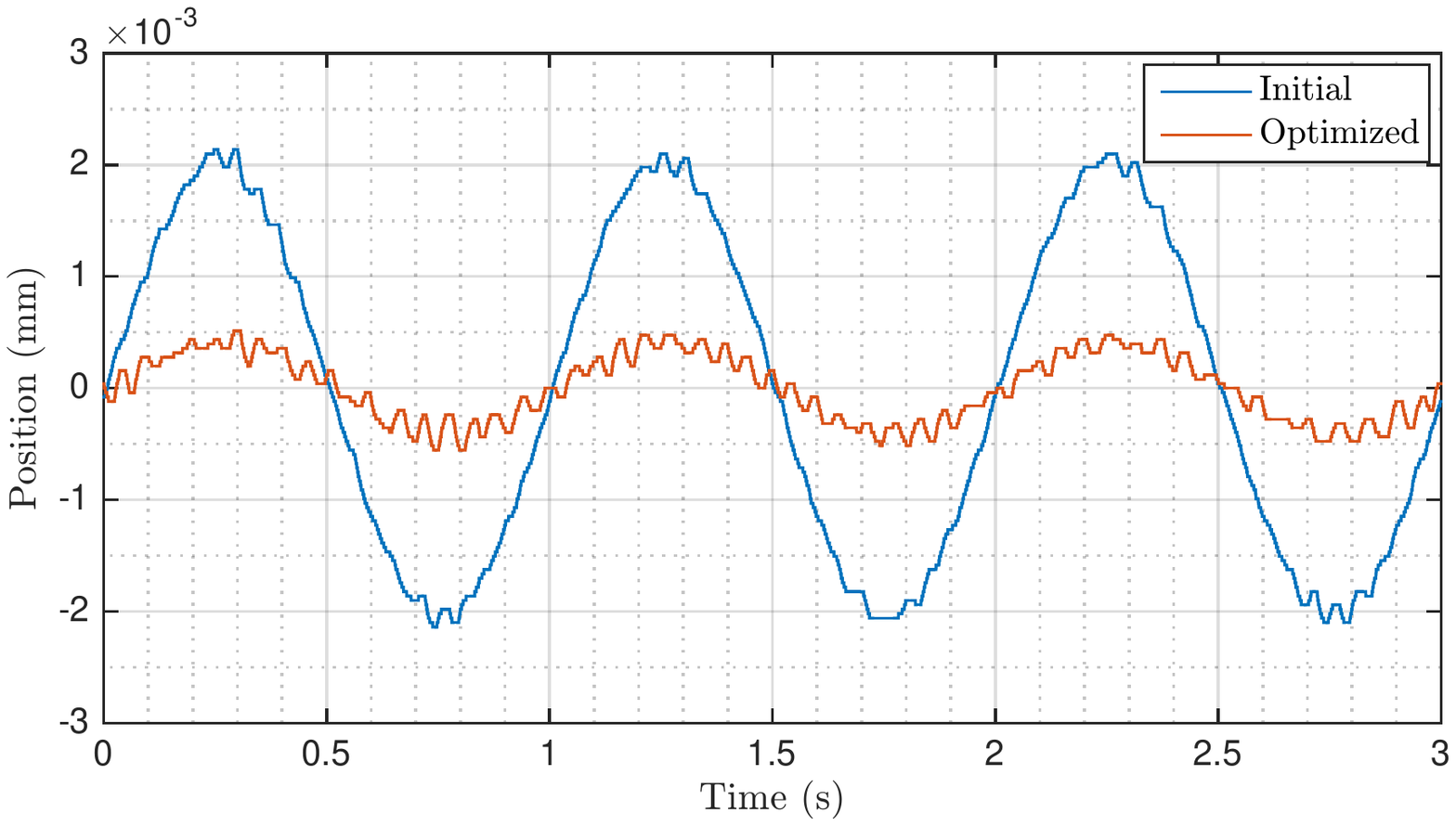}\\
	\caption{X axis disturbance rejection performance comparison under 1 Hz sinusoidal disturbance.}\label{fig:Disturbance}
\end{figure}

This section documents the experimental results of using the proposed data-driven optimization algorithm
for the maglev nanopositioning system as a case study.
A National Instruments (NI) PXI-8110 real-time controller is used with two FPGAs (NI PXI-7854R and 7831R)
to provide the necessary input/output (I/O) functions.
Two Trust TA320 and two TA115 linear current amplifiers are utilized
to power up the eight-phase coils.
The sampling frequency is 5 kHz, and the current limit for each phase of the coil arrays is set as 1.2 A.
The Renishaw fiber optic laser interferometers (Model: RLU10) are used for sensing of horizontal positions
with a count resolution of 40 nm,
and Lion Precision capacitive sensors (Model: CPL290 controller with C18 heads) are used for sensing of vertical positions
with a root mean square resolution of 150 nm.
The magnetically-levitated system including its actuation and sensor system are shown in Fig. \ref{fig:Maglevpic}, and its designed working range is 30\,mm\,$\times$\,30\,mm\,$\times$\,3\,mm according to the coil array length.

The motion profile used in the experiment is a fourth-order S-curve which is particularly suitable for precision motion control \cite{lambrechts2005trajectory}.
In order to meet the requirement of smooth motion, the profile is defined up to the fourth order with limited jerk and snap.
The position trajectory as well as its velocity, acceleration, jerk (time derivative of acceleration)
and snap (time derivative of jerk) are plotted in Fig. \ref{fig:Scurve}.
The magnetically-levitated system is controlled by a feedback controller in LabVIEW designed according to the typical proportional–integral–derivative (PID) structure,
as the PID control is essentially the most widely adopted control structure in the industry.
Nevertheless, it is pertinent at this juncture to also point out that
the data-driven multi-objective optimization algorithm proposed here
is also applicable to other types of feedback controllers,
as long as that it can be expressed in the rather common and standard form of \eqref{eq:Fixed_structure_C}.
Here specifically, the control input $u(t)$ is
\begin{equation}
u(t)=K_p(e(t)+\frac{1}{T_i} \int_{0}^{t} e(t^\prime) dt^\prime+T_d\frac{de(t)}{dt}),
\end{equation}
and the feedback controller can be written in the form of \eqref{eq:Fixed_structure_C} as
\begin{equation}
C(s,\rho)=\rho^T \bar C(s)=\begin{bmatrix}
K_p & K_p/T_i & K_pT_d \\
\end{bmatrix}
\begin{bmatrix}
1 \\
1/s \\
s
\end{bmatrix}.
\end{equation}
The goal of the data-driven optimization is to find out the controller parameters
that provide a smooth and accurate tracking of the motion profile
in Fig. \ref{fig:Scurve}, i.e., minimizing the cost function $J(\rho)$ in \eqref{eq:J}. Note that during the optimization process, no \textit{a priori} dynamic model information is needed nor will the algorithm attempt to build a model through system identification. To start with, the initial set of controller parameters ${^\mathbf{0}\rho}$ is designed based on the loop shaping method in \cite{zhu2017analysis} with a second order model (neglecting the nonlinearities and higher order dynamics) and further fine-tuned to provide a decent but non-optimized control performance, as in Table \ref{table:para}. It is worth noting, however, that loop shaping is a model-based method one can choose to use for the controller initialization but it is by no means necessary when there are no models available. In such cases, one shall simply tune the controller manually to achieve a decent performance and then rely on the proposed data-driven algorithm for performance optimization. The weightings are set as $w_1=10^7$ and $w_2=1$ so that the cost function values for the tracking error and control signal variation are on the same scale (the tracking error has a much smaller numerical value compared with the control signal variation). Nevertheless, we can still adjust the weightings according to the requirement of the motion system, i.e., further improvement on the accuracy or motion smoothness.

Despite the fact that the magnetically-levitated system is capable of conducting 6-DOF motion,
we consider here only the X-Y plane motion because it is most commonly used in semiconductor manufacturing \cite{yuan2016time,ma2017integrated}. Yet nevertheless, even in this application scenario, it is still the situation where the fully floating behavior and multi-axis coupling make extremely accurate identification of the motion dynamics largely impossible, so that traditional model-based approaches would encounter great difficulties
in being properly successfully deployed here. In high precision semiconductor manufacturing applications, it is often required to conduct a series of repetitive motions \cite{tan2019disturbance,li2018feedforward} on one of the axes.
Meanwhile, in order to guarantee the accuracy of highly complex semiconductor circuit patterns,
the tracking error from both Y and X axes needs to be minimized. Also, smooth motion should be ensured by minimizing the control input variation for both axes and using a higher-order S-curve trajectory.
By using the proposed data-driven multi-objective optimization in \textbf{Algorithm 1},
the control parameters in both Y and X axes are iteratively optimized as shown in Table \ref{table:para}
and the control performance in terms of the cost function
can be significantly improved as shown in Table \ref{table:cost}. In addition, a comparison of the optimized controller with the initial controller in the frequency domain is plotted in Fig. \ref{fig:Controller}.
One major advantage of this data-driven approach is its fast convergence rate
because it takes into account not only the gradient $\nabla J(^\mathbf{i}\rho)$ of the cost function
but the Hessian $\nabla^2 J(^\mathbf{i}\rho)$.
From Fig. \ref{fig:XYPara} and Fig. \ref{fig:XYCost},
we can observe that both the controller parameters and cost function value converge within only four iterations.
Note that the tracking cost $J_e$ or control variation cost $J_{\dot u}$ alone
may increase in some iteration, e.g., $J_{\dot u}$ of the Y axis in the $1^{\mathrm{st}}$ iteration,
but the total cost $J$ always decreases iteration by iteration.
The time-domain performance improvement for Y axis before
and after the data-driven optimization is plotted in Fig. \ref{fig:YTracking},
showing a significant reduction in both the tracking error and control signal variation; the root-mean-square (RMS) tracking errors are respectively $0.033$ mm and $0.0075$ mm. Here, the tracking error peaks, e.g., at $0.2$ s of the initial result, could be due to the laser interferometer signal loss or computational delays. Meanwhile, the tracking error and control signal variation of the X axis are also reduced as shown in Fig. \ref{fig:XTracking}. The RMS tracking errors for X axis are respectively $4.8297\times 10^{-4}$ mm and $2.0365\times 10^{-4}$ mm. Here, the tracking error is much smaller compared with the Y axis,
because the X axis is kept stationnary while the Y axis is moving. After $1.5$ s, the vibration still exists and this is due to the fact the stage is fully floating with little damping and has to deal with the disturbances from the force coupling. To further demonstrate the disturbance rejection performance, Fig.\mbox{ \ref{fig:Disturbance}} shows the X axis position measurement comparison under the effect of a $1$ Hz sinusoidal disturbance. From all these experimental results, we can see that the proposed approach is certainly effective and able to provide the appropriate optimized trajectory tracking in terms of both accuracy and smoothness, and the convergence rate is also suitably fast enough (only 4 iterations in our experiment) for practical applicability.

\section{Conclusion}

In this paper, we present a data-driven multi-objective optimization algorithm for repetitive motion tasks,
where no \textit{a priori} model information is available.
The proposed algorithm is applied to a multi-axis magnetically-levitated system
which is difficult to model accurately due to its fully floating behavior and multi-axis coupling.
By making use of the rich information contained in the actual motion data
under, say, the prevailing non-optimal conditions,
the algorithm can provide fast, efficient and effective controller optimization for the system
to operate towards optimality in a data-driven and iterative manner.
A well-designed cost function
is stated and specified, which
takes both smooth and accurate tracking into account
and the optimization process can be completed within a few iterations.
Our experimental results show that the motion performance of the maglev nanopositioning system
is enhanced significantly and could certainly meet the stringent requirement of present-day high-performance precision motion applications.

For future work, we believe applications of the proposed algorithm certainly
is not be limited to such a multi-axis magnetically-levitated system only,
and its potential
can be further exploited and deployed in other robotic systems (essentially especially those that are challenging to accurately model, e.g. quadrotors, legged robots, and soft robots, etc.).

\bibliographystyle{IEEEtran}
\bibliography{IEEEabrv,mybibTmech}
\begin{IEEEbiography}[{\includegraphics[width=1in,height=1.25in,clip,keepaspectratio]{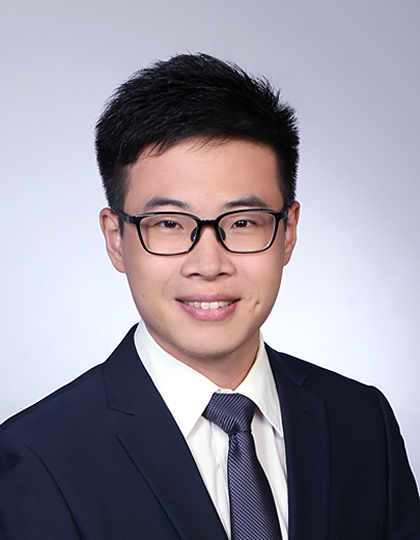}}]
	{Xiaocong Li} (S'14--M'17) received the B.Eng. degree from the Department of Electrical and Computer Engineering, National University of Singapore, in 2013, and the Ph.D. degree in electrical engineering from the NUS Graduate School for Integrative Sciences and Engineering, National University of Singapore, in 2017.

    He is a Research Scientist with the Mechatronics Group, Singapore Institute of Manufacturing Technology (SIMTech), Agency for Science, Technology and Research (A*STAR), where he is currently leading a Collaborative Research Project (CRP) with the National University of Singapore on data-driven controls. Since 2018, he has served as a member of the Thesis Advisory Committee for NUS Graduate School for Integrative Science and Engineering. His current research interest is focused on making sense of data for controls through learning as well as its applications to robotics and precision motion systems.
\end{IEEEbiography}

\begin{IEEEbiography}[{\includegraphics[width=1in,height=1.25in,clip,keepaspectratio]{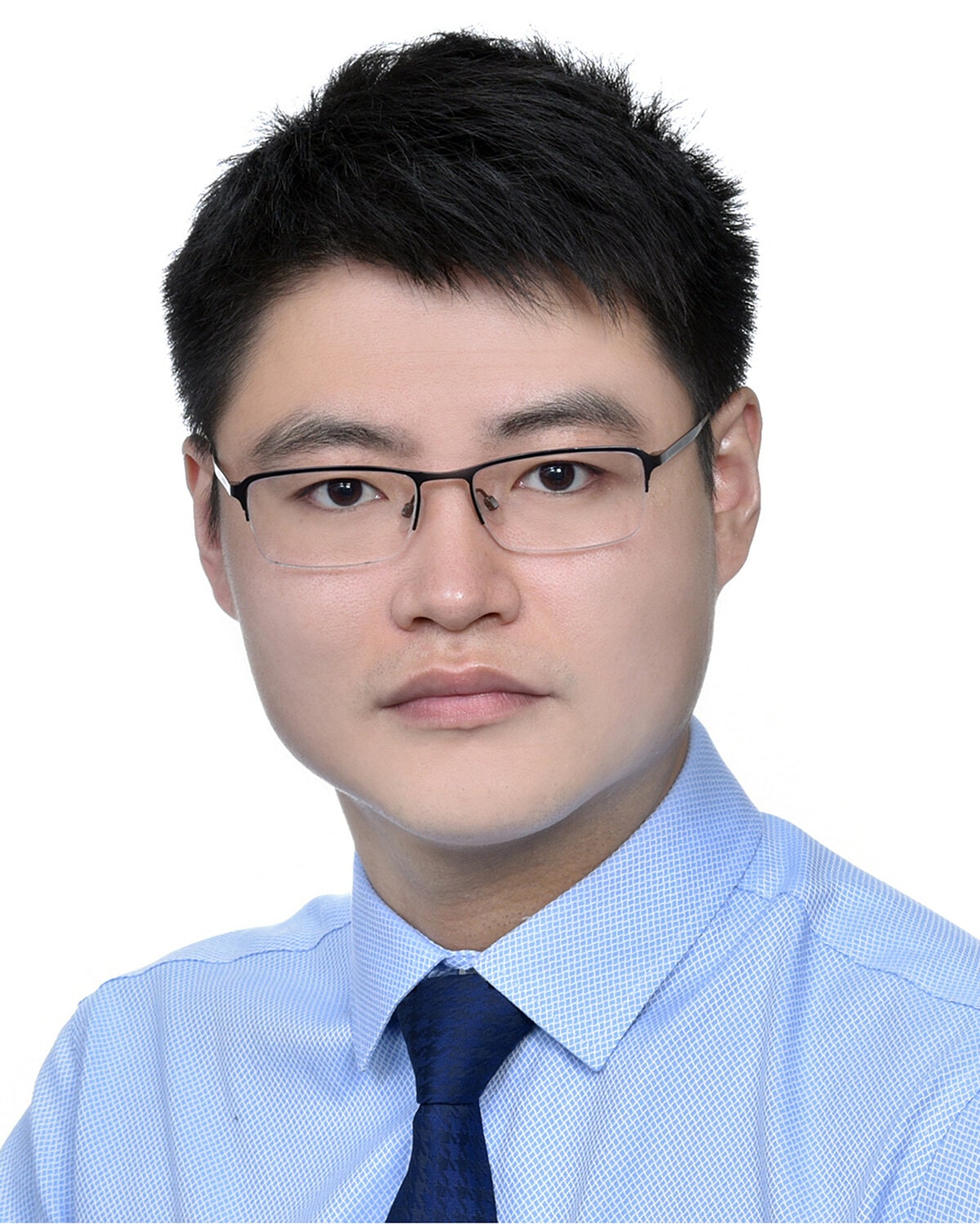}}]
	{Haiyue Zhu} (S'13--M'17) received the B.Eng. degree in automation from the School of Electrical Engineering and Automation and the B. Mgt. degree in business administration from the College of Management and Economics, Tianjin University, Tianjin, China, in 2010, and the M.Sc. and Ph.D. degrees from the National University of Singapore (NUS), Singapore, in 2013 and 2017, respectively, both in electrical engineering.
	
	He is currently a Scientist in Singapore Institute of Manufacturing Technology (SIMTech), Agency for Science, Technology and Research (A*STAR). His current research interests include intelligent mechatronic and robotic systems, etc.
\end{IEEEbiography}

\begin{IEEEbiography}[{\includegraphics[width=1in,height=1.25in,clip,keepaspectratio]{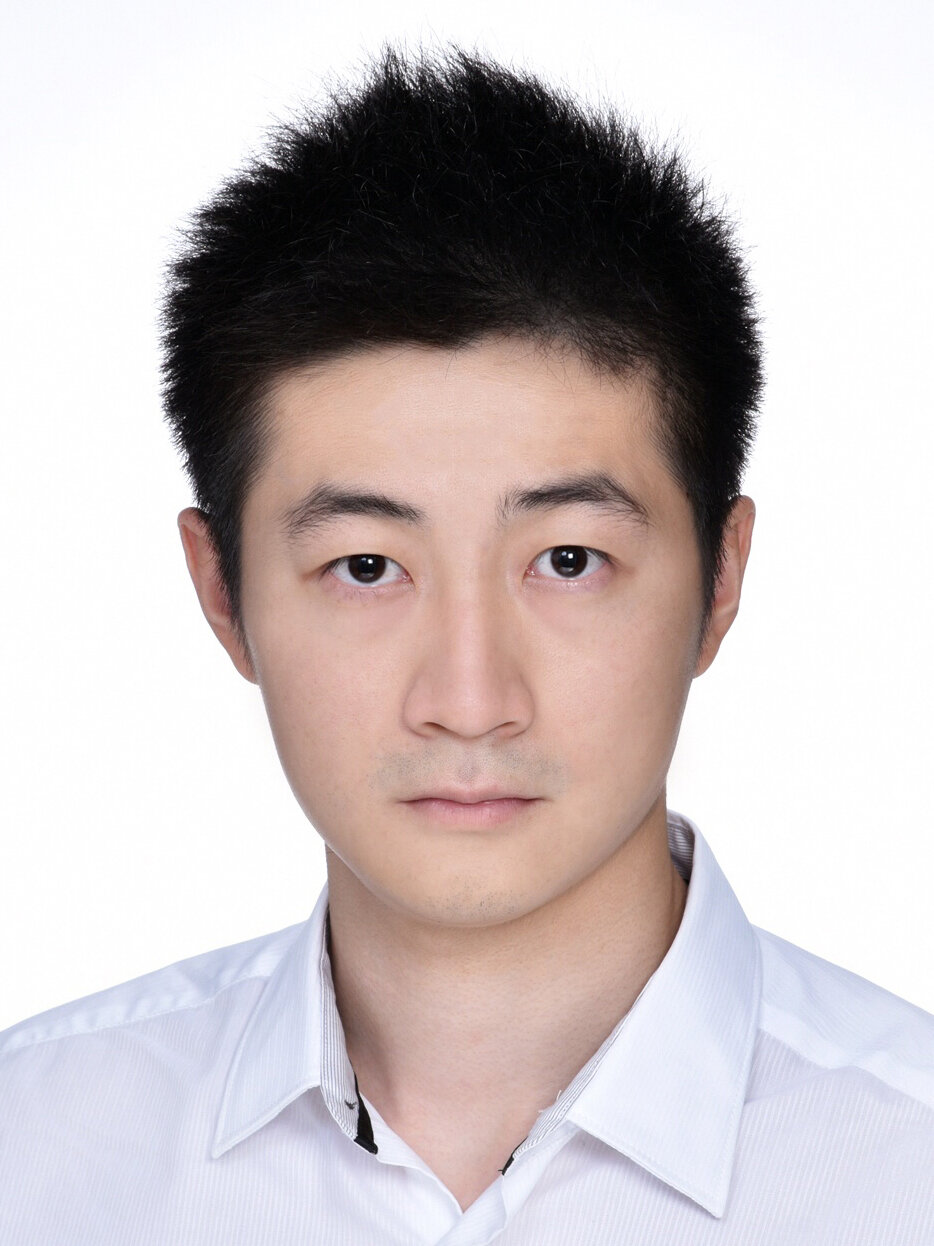}}]{Jun Ma}
	(S'15-M'18) received the B.Eng. degree with First Class Honours in electrical and electronic engineering from the Nanyang Technological University, Singapore, in 2014, and the Ph.D. degree in electrical and computer engineering from the National University of Singapore, Singapore, in 2018.
	
	From 2018 to 2019, he was a Research Fellow with the Department of Electrical and Computer Engineering, National University of Singapore, Singapore. In 2019, he was a Research Associate with the Department of Electronic and Electrical Engineering, University College London, London, U.K. He is currently a Visiting Scholar with the Department of Mechanical Engineering, University of California, Berkeley, Berkeley, CA, USA. His research interests include control and optimization, precision mechatronics, robotics, and medical technology.
	
	He was a recipient of the Singapore Commonwealth Fellowship in Innovation.
\end{IEEEbiography}

\begin{IEEEbiography}[{\includegraphics[width=1in,height=1.25in,clip,keepaspectratio]{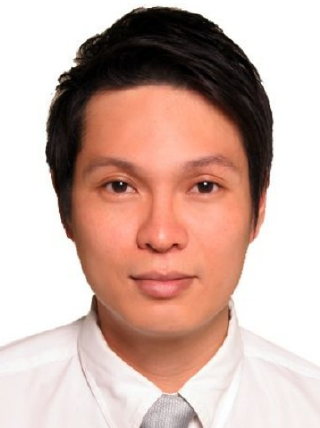}}]
	{Tat Joo Teo} (M'08) received the B. Eng. degree in mechatronics engineering from Queensland University of Technology, Australia, in 2003 and the Ph.D. degree from Nanyang Technological University, Singapore, in 2009.
	
	He was a research scientist with the Singapore Institute of Manufacturing Technology, Singapore, from 2009 to 2018. He was a visiting scientist with Massachusetts Institute of Technology, USA, in 2016, and also served on the engineering faculty at the National University of Singapore and Newcastle University in Singapore. His research interest is to explore the fundamentals of Newtonian mechanics, solid mechanics, kinematics, and electromagnetism to develop high precision mechatronics or robotic systems for micro-/nano-scale manipulation and bio-medical applications.
	
	Dr. Teo has published over 50 peer-reviewed articles and has 4 patents granted. In 2013, he received the IECON Best Paper Award in the theory and servo design category. In 2014, he became the first Singaporean to win the R\&D 100 Award, which is the most prestigious international award for technologically-significant products.
	
\end{IEEEbiography}

\begin{IEEEbiography}[{\includegraphics[width=1in,height=1.25in,clip,keepaspectratio]{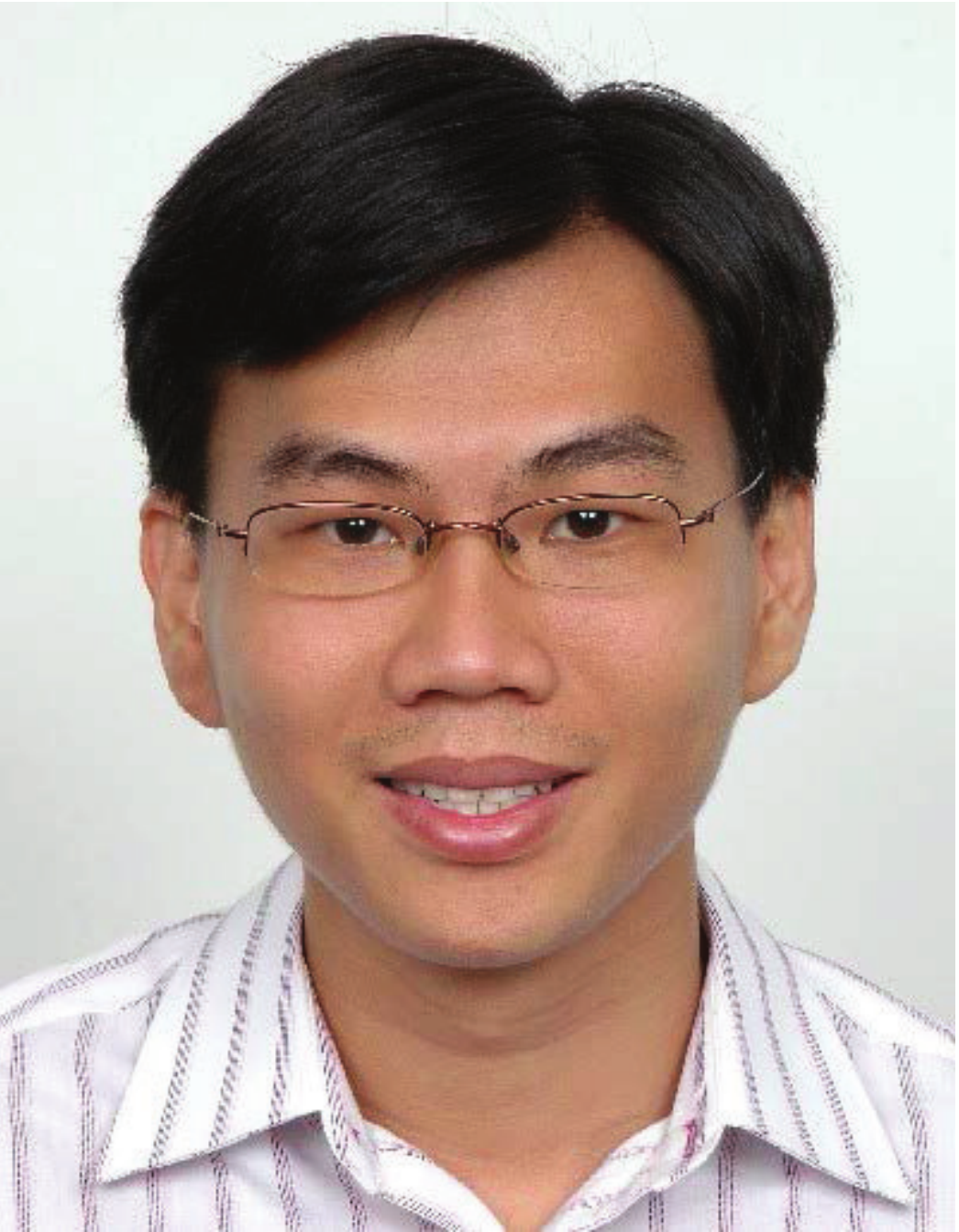}}]{Chek Sing Teo}(S'04-M'08)
	completed his Ph.D. degree at the National University of Singapore in 2008, under the Agency for Science Technology and Research (A-STAR) Scholarship Scheme, working on "Accuracy Enhancement for High Precision Gantry Stage". His research interests are in the application of advanced control techniques to precision mechatronic system and instrumentation; to enhance performance in motion control and measurement. His current work includes using mechatronics stiffness to reduce jerk reaction in high speed motion stage and sensor placement for adaptronics. He is currently working in the Singapore Institute of Manufacturing Technology (SIMTech) leading the Precision Mechatronics Team within the Mechatronics Group, as well as the co-Director of the SIMTech-NUS Precision Motion System joint lab.
\end{IEEEbiography}

\begin{IEEEbiography}[{\includegraphics[width=1in,height=1.25in,clip,keepaspectratio]{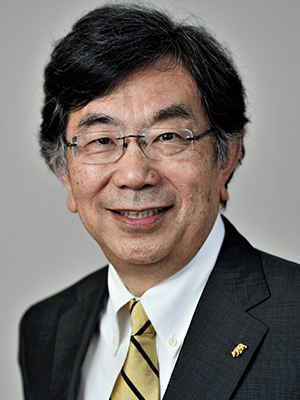}}]{Masayoshi Tomizuka}
	(M'86-SM'95-F'97-LF'17) received the B.S. the and M.S. degrees in mechanical engineering from the Keio University, Tokyo, Japan, and the Ph.D. degree in mechanical engineering from the Massachusetts Institute of Technology in February 1974.
	
	In 1974, he joined the faculty of the Department of Mechanical Engineering at the University of California at Berkeley, where he currently holds the Cheryl and John Neerhout, Jr., Distinguished Professorship Chair. His current research interests are optimal and adaptive control, digital control, motion control, and their applications to robotics and vehicles.
	
	He served as Program Director of the Dynamic Systems and Control Program of the Civil and Mechanical Systems Division of NSF (2002-2004). He served as Technical Editor of the ASME Journal of Dynamic Systems, Measurement and Control, J-DSMC (1988-93) and Editor-in-Chief of the IEEE/ASME Transactions on Mechatronics (1997-99). He is a Life Fellow of the ASME and IEEE, and a Fellow of International Federation of Automatic Control (IFAC) and the Society of Manufacturing Engineers. He is the recipient of the ASME/DSCD Outstanding Investigator Award (1996), the Charles Russ Richards Memorial Award (ASME, 1997), the Rufus Oldenburger Medal (ASME, 2002), the John R. Ragazzini Award (2006), and the Richard Bellman Control Heritage Award (2018).
\end{IEEEbiography}

\begin{IEEEbiography}[{\includegraphics[width=1in,height=1.25in,clip,keepaspectratio]{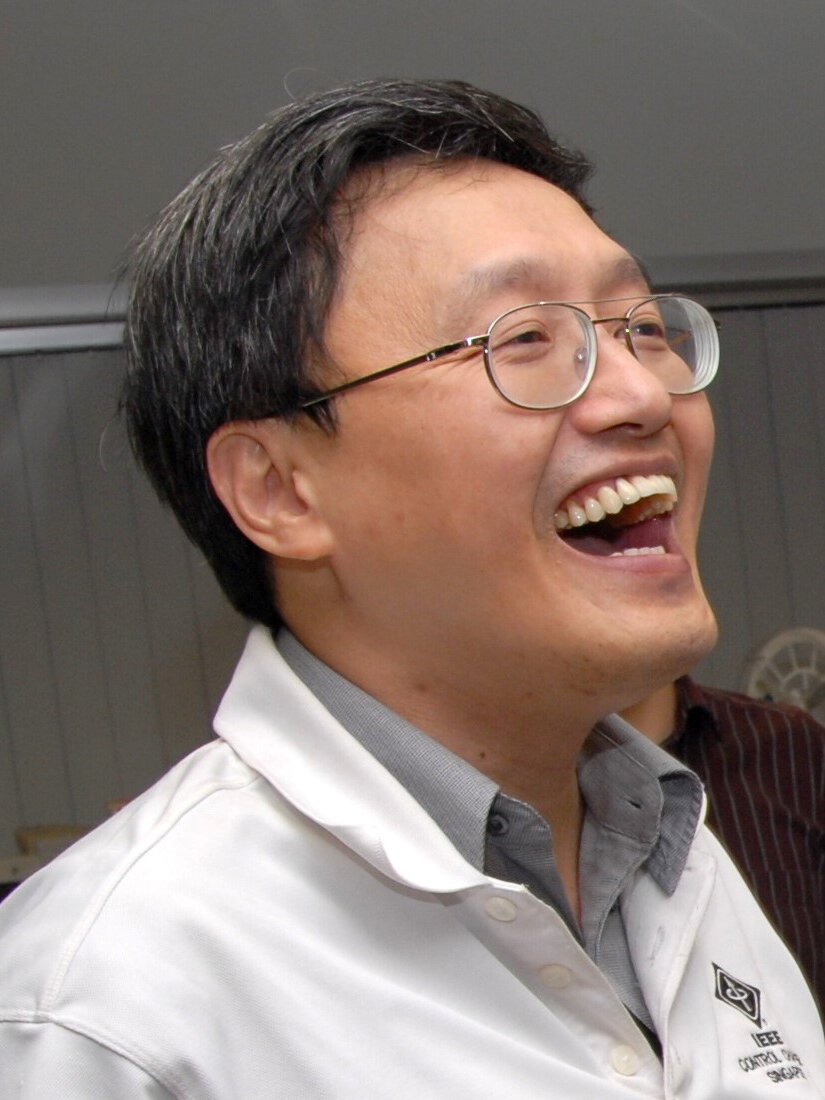}}]{Tong Heng Lee}
	received the B.A. degree with First Class Honours in the Engineering Tripos from the Cambridge University, Cambridge, U.K., in 1980, the M.Engrg. degree from the National University of Singapore (NUS), Singapore, in 1985, and the Ph.D. degree from the Yale University, New Haven, CT, USA, in 1987.
	
	He is a Professor in the Department of Electrical and Computer Engineering at the National University of Singapore, and also a Professor in the NUS Graduate School, NUS NGS. He was a Past Vice-President (Research) of NUS. Dr. Lee's research interests are in the areas of adaptive systems, knowledge-based control, intelligent mechatronics, and computational intelligence.
	
	He currently holds Associate Editor appointments in the IEEE Transactions in Systems, Man and Cybernetics, Control Engineering Practice (an IFAC journal), and the International Journal of Systems Science (Taylor and Francis, London). In addition, he is the Deputy Editor-in-Chief of IFAC Mechatronics journal.
\end{IEEEbiography}

\end{document}